\newtheorem{coro}{Corollary} 
\newtheorem{rem}{Remark} 
\newtheorem{defn}{Definition}
\newtheorem{thm}{Theorem}
\newtheorem{prop}{Proposition}
\newtheorem{exmp}{Example}
\newtheorem{prob}{Problem}
\newcommand{\subscr}[2]{#1_{\textup{#2}}}
\title{Functional target controllability of networks: structural properties and efficient algorithms}
\author{Christian Commault\thanks{Univ. Grenoble Alpes, CNRS, GIPSA-lab, F-38000 Grenoble, France. Email:  christian.commault@gipsa-lab.grenoble-inp.fr} \quad\and\quad 
Jacob van der Woude\thanks{DIAM, EWI, Delft University of Technology, Delft, the Netherlands,  Email: J.W.vanderWoude@tudelft.nl}
\quad\and\quad Paolo Frasca\thanks{Univ. Grenoble Alpes, CNRS, Inria, GIPSA-lab, F-38000 Grenoble, France. paolo.frasca@gipsa-lab.grenoble-inp.fr}}
\begin{document}

\maketitle

\begin{abstract}
In this paper we consider the problem of controlling a limited number of target nodes of a network.  
Equivalently, we can see this problem as controlling the target variables of a structured system, where the state variables of the system are associated to the nodes of the network.
We deal with this problem from a different point of view as compared to most recent literature. 
Indeed, instead of considering controllability in the Kalman sense, that is, as the ability to drive the target states to a desired value, we consider the stronger requirement of driving the target variables as time functions. The latter notion is called functional target controllability.
We think that restricting the controllability requirement to a limited set of important variables justifies using a more accurate notion of controllability for these variables. 
Remarkably, the notion of functional controllability allows formulating very simple graphical conditions for target controllability in the spirit of the structural approach to controllability. 
The functional approach enables us, moreover, to determine the smallest set of steering nodes that need to be actuated to ensure target controllability, where these steering nodes are constrained to belong to a given set. We show that such a smallest set can be found in polynomial time. 
We are also able to classify the possible actuated variables in terms of their importance with respect to the functional target controllability problem.
\end{abstract}

\section{Introduction}
The network community has shown a real interest in control concepts in the recent years \cite{STRO:01,LOM:07,SOR:07} and the control community has reciprocated by a growing interest in network applications
~\cite{ME-SM-MC-CK-AB:12,GP-GN:12,FP-SZ-FB:14,ols:14,GB-AG-PF-FP:15,YZ-JC:17,GL-CA:18}. 
Most of the papers in literature study controllability according to the most common definition in systems theory, that is, the ability to steer the state to a target point: we shall refer to this definition as {\em point-wise} controllability. This definition (and the criticism of its limits~\cite{Cow:12,FP-SZ-FB:14}) have also been the starting point of a series of works that approached more advanced questions like quantifying the energy required for control \cite{YAN:12,Kli:17,GL-CA:18} and the robustness of the controllability properties~\cite{TM-VK-DB-FP:18,GB-AG-PF-FP:15,SCJ-MW-MZ-RAC:18,FP-CF-SZ-SZ:18} in the context of networks.

For linear systems, the classical notion of point-wise controllability lends itself to what is called the ``structural'' approach, as it was introduced four decades ago by Lin \cite{Lin:74,MU:87}. 
In this approach, the controllability properties are characterised in terms of the sole topology of the network associated to the system: the potential of these methods in network science has become manifest in the last few years~\cite{Liu:11,Ruths:14,TM-DB-FP:18}.  
The controllability problem under study in these references is known as the Minimum Input Problem \cite{Liu:11}, or more generally the Minimum Controllability Problem \cite{ols:14}, and can be formulated as follows.
Given an autonomous dynamical system, one looks for the minimal number of driver nodes (nodes which are directly connected with a control input) such that we have a point-wise control of all the states, {\em i.e.,} we are able to drive the global state from any initial point to any final point in any fixed positive time.

Since controlling the whole state may be too demanding, and often not necessary, several authors have dealt with the so-called ``target control'' \cite{Gao:14,Kli:17,LI:18,WAA:17}. 
In this case, one defines a set of important variables and requires to control the corresponding states only: this choice of course induces a relaxation of the controllability conditions, since the system no longer needs  be controllable in the usual, full-state, sense. Furthermore, target controllability can in principle be checked by using a natural extension of the Kalman controllability condition. 
However, this definition is not so easy to deploy as it seems, for two main reasons. 

The first difficulty is the inability to leverage the structural approach. A structural characterisation of target controllability was left as an open problem in \cite{MU:90} and, to the best of our knowledge, no graph characterisation for structural target controllability is available to date (unlike for structural point-wise state controllability). In the literature on structural target controllability, the authors either develop approximate approaches \cite{Gao:14,WU:15}, or study particular types of systems: for instance, the problem has an elegant solution when the dynamics of the system is symmetric \cite{LI:18}.
 
The second difficulty is the intrinsic hardness of the problem: the Minimum Controllability Problem for target controllability has recently been proved to be a NP-hard problem \cite{CZE:18}. This negative result implies the need for heuristic solutions in practical situations: one such approximate algorithm is provided in \cite{CZE:18}.

To overcome these difficulties, we propose in this paper to use a different point of view on structural target controllability. 
We consider here  {\em functional output controllability}, {\em i.e.,} we ask for the possibility to follow any output profile, and not only the possibility to reach any particular point in the output space. 
This is more demanding than the usual (point-wise) controllability.
In particular, when the number of inputs is less than the number of states, the whole state space cannot be functionally controllable. Moreover, we assume that the steering nodes must be chosen in a given set, defined by physical or technological considerations, that we call available nodes \cite{OLS:15}.

Our opinion is that, since we only concentrate on some important variables, a more accurate controllability can be desirable and be afforded for these variables.
To illustrate our point of view, we may think of examples from as diverse domains as automotive and drug delivery. 
Let us think about a car with an automatic gear box. 
The main control actuators are the accelerator, the driving wheel and the brake. 
With these three controls, the driver can influence the hundreds of variables which can be listed in a reasonable physical model of the vehicle. 
However, for the driver, very few of these variables are really important, roughly speaking, only the velocity and the direction of the vehicle are essential. On another hand, for these two variables, we need a precise control of their time behaviour, not only a point-wise control.
In biology networks, in particular for pathology treatment \cite{WU:15,KA:17}, only some variables are essential to be controlled, and it is of interest to identify the nodes (or cells) where drugs must be applied to avoid the abnormal behaviour of these essential variables. 
For these sensitive health parameters, it is thought that a trajectory controllability is much more desirable than the possibility to evolve from one value to another one without mastering the transient behaviour, which could be dangerous. 
Moreover, is clear that for this type of application the drug cannot in general be applied to any node of the network, the steering nodes can only be chosen in a restricted set of admissible nodes.

Owing to this new point of view, the main contributions of our work can be summarized in the following four points. 
\begin{itemize}
\item  We define the notion of {\em functional target controllability} and characterise it in graph terms for structured systems. 
This characterisation (given in Corollary~\ref{OFC}) is a direct consequence of results on the structural rank of transfer matrices which appeared around three decades ago \cite{OHTA:85, CDP:91,VDW1:91,VDW2:91}.
\item We define the {\em Minimum Target Controllability Problem (MTCP)} when controllability is understood in the functional sense and with the constraint that the steering nodes have to be chosen within a given set of available nodes. By exploiting the above characterisation of controllability, we give a full solution to it (Proposition~\ref{Exist}).
Indeed, we characterise the minimum number of steering nodes and describe a procedure to find a set of steering nodes with minimum size, which ensure functional target controllability.
\item We establish (Theorem~\ref{Class2}) a {\em classification of the available nodes} depending on their importance for the functional target controllability, namely dividing them into essential, useful and useless nodes.
\item We show that solving the MTCP and classifying the available nodes can be done with {\em polynomial complexity} by using standard algorithms that solve Maximum Flow problems, such as the Ford-Fulkerson algorithm (Proposition~\ref{Sep-Cut}).
\end{itemize}

In order to present these contributions, the outline of this paper will be the following.
In Section~\ref{ProbFormul} we present the target controllability problem with the two different points of view.
In Section~\ref{LSSOBS} we recall the main results of structured systems concerning the graph characterisations of the types of controllability and we illustrate the results by two examples.
We state the main result on the Minimum Target Controllability Problem in Section~\ref{MTCP}. 
In Section~\ref{Class1} we give a classification of available states with respect to the MTCP.
Section~\ref{Algo} deals with algorithmic and complexity aspects of the main result. 
Finally, in section~\ref{Conc} we conclude the paper with some remarks and topics for future research.
\section{Output controllability: Point-wise {\em vs} functional, and problem formulation} \label{ProbFormul}
In this paper, we consider a large scale system composed of $n$ agents interacting together with linear dynamics. We can then represent the whole behaviour of the system by the simple equation
\begin{eqnarray}  \label{lin1}
\begin{array}{c}
\dot{x}(t)=Ax(t),
\end{array}
\end{eqnarray}
where  $x(t) \in {\mathbb R}^n$ is the state vector and $A$ is a real $n \times n$ matrix.
We will also consider the system when the dynamics is influenced by external input signals, and when some variables called outputs, which are linear combinations of state variables, give an external view of the system. 
The global system can then be represented as
\begin{eqnarray}  \label{lin2}
\Sigma:
\begin{array}{l}
\dot{x}(t)=Ax(t)+Bu(t),\\
y(t)=Cx(t),
\end{array}
\end{eqnarray}
where  $u(t) \in {\mathbb R}^m$ is the input vector, $y(t) \in {\mathbb R}^p$ is the output vector, and $B$ and $C$ are real matrices of suitable dimensions. \\
Occasionally, we will distinguish $m$ states, called the {\em steering states} $\mathcal{S}=\{x_{i_{1}}, \dots,x_{i_{m}}\}$, with $i_k\in \{1, \dots,n\}$ and $i_1<i_2< \dots<i_m$. 
To each steering state $x_{i_k}(t)$ we associate a control input $u_k(t)$ which acts only on this state variable. 
In this case, the input matrix will be denoted by $B_{\mathcal{S}}$. 
The $B_{\mathcal{S}}$ matrix has $m$ columns, and column $k$ has all its entries equal to $0$ except for $b_{i_{k}k}$. \\
Similarly, a certain number of state variables ${\mathcal T}=\{x_{j_{1}}, \dots,x_{j_{p}}\}$, with $j_l\in \{1, \dots,n\}$ and $j_1<j_2< \dots<j_p$, called {\em target variables}, are of a prominent importance. 
Each target state $x_{j_l}(t)$ is associated with a unique output $y_l(t)$. 
The set of target variables induces therefore the $C_{\mathcal T}$ matrix. 
The $C_{\mathcal T}$ matrix has $p$ rows, each row $l$ has all its entries equal to $0$ except for $c_{l j_l}$.

\subsection{Point-wise output controllability}
A first possibility for considering output controllability is to define it as an extension of the classical state controllability.
\smallskip
\begin{defn}[Point-wise controllability]\label{pointw}
The system (\ref{lin2}) is said to be {\em (point-wise) output controllable} if, for initial condition $x(0)=0$, any instant $T>0$, and any point $y_T$ of the output space $\mathbb{R}^p$, there exists an input function $u(t)$ such that $y(T)=y_T$.
\end{defn}
It is easy to see that this output controllability can be tested via an extension of the Kalman controllability condition, {\em i.e.,} the system is output controllable if and only if 
\begin{eqnarray}\label{Output-Cont}
\mbox{rank}(C[B,AB, \dots,A^{n-1}B])=p,
\end{eqnarray}
with $p$ being the number of outputs.

\subsection{Functional output controllability}
In this paper, we will prefer a refined notion of output controllability. Instead of looking for the possibility to reach any point in the output space, we will ask for the possibility to follow any output trajectory. The notion of functional output controllability was introduced first in \cite{BROC:65}, where it was called functional reproducibility. In the latter paper, functional reproducibility was also characterised for linear systems. This characterisation makes more precise the intuition which relates the possibility of finding an input producing a given output to some form of invertibility of the system. Several papers, see for example \cite{SILV:69,SAIN:69}, brought additional contributions in this area and discussed the relations between point-wise and functional output controllability.

In order to define it rigorously, we recall that a function $f(t)$ is said to be $C^\infty [t_1,t_2]$ if it is differentiable on the interval $[t_1,t_2]$ for any order of differentiation.
\smallskip
\begin{defn}[Functional controllability]\label{funct}
The system (\ref{lin2}) is said to be {\em functional output controllable} if, for initial condition $x(0)=0$, any instant $T>0$, and any $C^\infty[0,T]$ trajectory $\tilde y(t)$ in the output space $\mathbb{R}^p$, there exists an input function $u(t) \in C^\infty[0,T]$, such that the output of (\ref{lin2}) satisfies $y(t)=\tilde y(t)$ for all $t \in [0,T]$.
\end{defn}
From classical linear system theory, we recall that the transfer matrix of the system (\ref{lin2}) is the rational matrix $T(s)=C(sI_n-A)^{-1}B$, where $I_n$ is the identity matrix of size $n$. 
If we denote by $\bar u(s)$ and $\bar y(s)$, the Laplace transforms of the vector time functions $u(t)$ and $y(t)$, respectively, we have that $\bar y(s)=T(s)\bar u(s)$, when assuming zero initial conditions. 
A transfer matrix is a matrix over the field of rational functions and as such all classical properties (rank, invertibility,...)  of real matrices are applicable to matrix transfer functions. 
For a system of type (\ref{lin2}), the rank of the transfer matrix $T(s)$, sometimes called its normal rank, is defined as the rank of the matrix for almost any value of the variable $s$, with the finite number of singularities coming from the poles and the zeros of $T(s)$ \cite{KAI:80}.\\
From basic results on control, the functional output controllability can be characterised as follows.
\begin{prop}[Output controllability and transfer matrix~\cite{BROC:65}]\label{caract}
The system (\ref{lin2}) is {functional output controllable} if and only if the transfer matrix $T(s)$ has rank $p$ (the number of outputs). 
\end{prop}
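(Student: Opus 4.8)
The plan is to prove both implications by passing to the Laplace (frequency) domain, where the zero--initial--condition input--output law collapses to the algebraic identity $\bar{y}(s) = T(s)\bar{u}(s)$ already recorded above. The whole argument then hinges on elementary linear algebra over the field of rational functions $\mathbb{R}(s)$, combined with the smoothness hypothesis on the trajectories to translate frequency--domain operations back into legitimate time--domain ones.

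For necessity I would argue by contraposition. Suppose $\mathrm{rank}(T(s)) = r < p$. Then the $p$ rows of $T(s)$ are linearly dependent over $\mathbb{R}(s)$, so there is a nonzero row vector $\phi(s) \in \mathbb{R}(s)^{1\times p}$ with $\phi(s)T(s) = 0$; clearing denominators, I may take $\phi(s)$ to be a polynomial row vector. For every admissible input we then get $\phi(s)\bar{y}(s) = \phi(s)T(s)\bar{u}(s) = 0$, which in the time domain says that the output is forced to satisfy the fixed homogeneous differential relation $\phi(d/dt)\,y(t) = 0$, with the zero initial data inherited from $x(0)=0$. Since $\phi \neq 0$, there exist $C^\infty$ trajectories $\tilde{y}$ violating this constraint, and no such target can be reproduced; hence the system is not functional output controllable.

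For sufficiency, assume $\mathrm{rank}(T(s)) = p$. Then $T(s)$, viewed as a $p\times m$ matrix over $\mathbb{R}(s)$, has full row rank and is therefore surjective as a linear map $\mathbb{R}(s)^{m}\to\mathbb{R}(s)^{p}$, so it admits a rational right inverse $G(s)$ of size $m\times p$ with $T(s)G(s) = I_p$; this is precisely the ``invertibility'' intuition mentioned in the excerpt. Given a target $\tilde{y}\in C^\infty[0,T]$ with Laplace transform $\bar{y}(s)$, I would define the input through $\bar{u}(s) = G(s)\bar{y}(s)$, so that $T(s)\bar{u}(s) = T(s)G(s)\bar{y}(s) = \bar{y}(s)$ and the produced output coincides with $\tilde{y}$. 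The only delicate point is that $G(s)$ is generally improper; I would split it as $G(s) = P(s) + G_{\mathrm{sp}}(s)$ with $P(s)$ a polynomial matrix and $G_{\mathrm{sp}}(s)$ strictly proper. The polynomial part acts on $\tilde{y}$ as a finite--order differential operator, which is legitimate exactly because $\tilde{y}$ is infinitely differentiable, while the strictly proper part is a genuine convolution producing a smooth signal; together they yield a $C^\infty$ input $u$.

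The main obstacle I expect is the technical reconciliation in the sufficiency direction between the semi--infinite Laplace transform and the finite horizon $[0,T]$, and the verification that the constructed $u$ drives the output to $\tilde{y}$ \emph{exactly} under zero initial state, rather than only up to a transient. I would handle this by first extending $\tilde{y}$ to a $C^\infty$ function on $[0,\infty)$ (always possible), and then checking that the differentiation--plus--filtering construction introduces no impulsive terms at $t=0$ and respects the zero initial condition --- the smoothness assumption being precisely what guarantees both that the required derivatives exist and that the time--domain realization is well defined on the whole interval.
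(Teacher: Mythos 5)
The paper offers no proof of this proposition at all: it is imported from the classical literature on functional reproducibility (\cite{BROC:65}), so your argument can only be measured against that standard route --- which is in fact the route you take (an annihilating row vector over $\mathbb{R}(s)$ for necessity, a rational right inverse split into polynomial plus strictly proper parts for sufficiency). The necessity half is sound: from $\phi(s)T(s)=0$ you get $\phi(s)\bar y(s)=0$ for every admissible input; since $\mathcal{L}[\phi(d/dt)y](s)=\phi(s)\bar y(s)-q(s)$ with $q$ a polynomial vector in the initial values of $y$, and Laplace transforms of continuous functions of exponential order decay as $\mathrm{Re}\,s\to\infty$ while nonzero polynomials do not, every achievable output must satisfy $\phi(d/dt)y\equiv 0$; a smooth target violating this constraint (say $e^{\lambda t}$ in a component $j$ with $\phi_j\neq 0$ and $\phi_j(\lambda)\neq 0$, zero elsewhere) is then unreachable.

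The genuine gap is in the sufficiency half, precisely at the point you yourself flag as the main obstacle and then claim to settle by a routine check: that check cannot succeed for arbitrary $C^\infty$ targets. Concretely, with $x(0)=0$ the output of (\ref{lin2}) always satisfies $y(0)=Cx(0)=0$ (there is no direct feedthrough term), so a target with $\tilde y(0)\neq 0$ can never be matched on all of $[0,T]$ by any smooth input whatsoever --- smoothness of $\tilde y$ is irrelevant to this obstruction. In your construction the failure appears as follows: $\mathcal{L}[P(d/dt)\tilde y](s)=P(s)\bar y(s)-q(s)$, where $q$ is a polynomial vector built from $\tilde y(0),\tilde y'(0),\dots$, so the input you build has transform $G(s)\bar y(s)-q(s)$ and the produced output has transform $\bar y(s)-T(s)q(s)$; it matches $\tilde y$ only modulo a transient that vanishes exactly when the initial jet of $\tilde y$ is compatible with the zero initial state, for instance when $\tilde y$ and sufficiently many of its derivatives vanish at $t=0$. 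Your argument therefore does prove functional reproducibility on that restricted class of targets --- which is how the classical theorem of \cite{BROC:65} is actually formulated, and is presumably the intended reading of Definition~\ref{funct} --- but the final verification that the construction ``introduces no impulsive terms at $t=0$ and respects the zero initial condition'' for every smooth $\tilde y$ is a step that would fail, not a step left to the reader; closing it requires either restricting the admissible targets or weakening exact matching on $[0,T]$ to matching after an initial transient.
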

The systems whose transfer matrix is such that rank $T(s)=p$ are called right invertible systems. 
Proposition~\ref{caract} means that, for right invertible systems, given an objective output function $\tilde y(t)$ (or its Laplace Transform), it is possible to find an input function $\tilde u(t)$ which can produce this output. 
This assertion assumes some smoothness of the function $\tilde y(t)$, when the input needs to remain in some physically feasible function class.
This is why we restrict here the output trajectories to be $C^\infty$. 
This notion of functional controllability is more powerful and implies the classical (point-wise) controllability, but it is of course more demanding in terms of conditions on the system. A discussion on these two points of view on output controllability appeared in \cite{HAU:83} in the context of non-interacting control. 
\subsection{Formulation of the output controllability problem}
We are given a dynamic system, as in (\ref{lin1}), representing the network, so that the matrix $A$ is given. 
The designer has decided that a certain number of state variables ${\mathcal T} \subset X$, called target variables, are of importance. 
The set of target variables induces in one-to-one correspondence an output set $Y_{\mathcal T}$ and therefore the $C_{\mathcal T}$ matrix. 
We have now to choose a minimum number of steering nodes ${\mathcal S}$, which will define  an input set $U_{S}$, and therefore the $B_{\mathcal S}$ matrix, such that the system (\ref{lin2}) is functionally output controllable. Moreover, for physical or technological reasons, there may exist some variables which cannot be directly controlled, in such a way that the steering variables must belong to a restricted set, called the {\em available set}, ${\mathcal A \subset \mathcal X}$. 
\begin{prob}\label{Prob}
Given a system of type (\ref{lin1}) with a set ${\mathcal T}$ of target variables, characterise, within the set of available variables $\mathcal{A}$, the sets of steering variables $\mathcal{S}$ such that the system $(A,B_{\mathcal{S}},C_{\mathcal T})$ is functional output controllable. This characterisation includes:
\begin{itemize}
\item The determination of the minimum size of an admissible set of steering variables.
\item An evaluation of the importance of each available steering variable for functional output controllability.
\end{itemize}
\end{prob}
In order to fit this problem with the network paradigm, the systems will be studied in the structured system framework which has a natural graph interpretation.

\section{Linear structured systems and functional output controllability} \label{LSSOBS}
In this section we will recall first the main notions and graph tools for structured systems. 
We will then recall the well-known result on structural controllability and finally present our main result on functional output controllability. 
The concepts and results will be illustrated and compared via two examples.
\subsection{Structured systems and structural controllability}
We consider a linear system with parametrized entries denoted by
$\Sigma_\Lambda$.
\begin{eqnarray}  \label{struct}
\Sigma_{\Lambda} :
\begin{array}{lc}
\dot{x}(t)=A_{\Lambda}x(t)+B_{\Lambda}u(t), \\
y(t)=C_{\Lambda}x(t),
\end{array}
\end{eqnarray}
where $x(t) \in \mathbb{R}^n$ is the state vector, $u(t) \in \mathbb{R}^m$ the input signal, and $y(t) \in \mathbb{R}^p$ the output signal. 
Further, $A_{\Lambda}$, $B_{\Lambda}$ and $C_{\Lambda}$ are matrices of appropriate dimensions in which the non-zero entries are each replaced by a parameter, and where all parameters are collected in a parameter vector $\Lambda \in \mathbb{R}^k$. 
The system is called a linear structured system.
Clearly, the entries of the composite matrix 
\begin{eqnarray}
J_{\Lambda} =
\left [
\begin{array}{cc}
A_{\Lambda}& B_{\Lambda} \\
C_{\Lambda}& 0
\end{array}
\right ],
\end{eqnarray}
are either fixed zeros or independent parameters (not related by algebraic equations) \cite{Lin:74,MU:87,DionCV:02}.
Occasionally, we will use the system without explicit inputs and outputs as in (\ref{lin1}): the structured system will then be
\begin{eqnarray}  \label{Alambda}
\begin{array}{c}
\dot{x}(t)=A_{\Lambda}x(t).
\end{array}
\end{eqnarray}

For linear structured systems one can study generic properties, \emph{i.e.,} properties which are true for almost all values of the $k$ parameters collected in $\Lambda$. 
More precisely, a property is said to be {\em generic} (or {\em structural}) if it is true for all values of the parameter vector $\Lambda$ outside a proper algebraic variety in the parameter space $\mathbb{R}^k$. Recall that a proper algebraic variety is the intersection of the zero set of some non-trivial polynomials with real coefficients in the  $k$ parameters of the system. A proper algebraic variety has Lebesgue measure zero.\\
A directed graph $G(\Sigma_{\Lambda})=(Z,W)$ can be associated with a structured system $\Sigma_{\Lambda}$ of type (\ref{struct}).
\begin{itemize}
\item The node set is $Z=X \cup U \cup Y$, where $X$, $U$ and $Y$ are the state node set, input node set and output node set, given by $\{x_1,x_2,\ldots,x_n\}$, $\{u_1,u_2,\ldots,u_m\}$ and $\{y_1,u_2,\ldots,y_p\}$, respectively.
\item The edge set is $W= \{(x_i,x_j)|a_{\Lambda ji}   \neq 0 \}\cup \{(u_i,x_j)|$ $b_{\Lambda ji} \neq 0 \} $ $ \cup\{(x_i,y_j)|c_{\Lambda ji} \neq 0 \}$, where $a_{\Lambda ji}$ denotes the $(j,i)$th entry of $A_\Lambda$ and $(x_i,x_j)$ denotes an edge from node $x_i$ to node $x_j$, and similarly for  $b_{\Lambda ji}$ and $(u_i,x_j)$, and $c_{\Lambda ji}$ and $(x_i,y_j)$.
\end{itemize}
In the particular case (\ref{Alambda}), the graph is denoted $G(A_\Lambda)$.

A path in $G(\Sigma_\Lambda)$ from a node $v_{0}$ to a node $v_{q}$  is a sequence of edges, $(v_{0},v_{1}),(v_{1},v_{2}),\ldots,(v_{q-1},v_{q})$, such that $v_{t} \in Z$ for $t=0,1,\ldots,q$, and $ (v_{t-1},v_{t})\in W$ for $t=1,2,\ldots,q$.  
The nodes $v_0, \dots,v_q$ are then said to be {\em covered} by the path. 
A path which does not meet the same node twice is called a {\em simple} path. 
If $v_0 \in U$ and $v_q \in X$, the path is called an input-state path. 
A path for which $v_0=v_q$ is called a circuit. 
A {\em stem} is a simple input-state path. 
A system is said to be {\em input-connected} if any state node is the end node of a stem. 
A cycle is a circuit which does not meet the same node twice, except for the initial/end node. 
Two paths are disjoint when they cover disjoint sets of nodes. 
When some stems and cycles are mutually disjoint, they constitute a {\em disjoint set of stems and cycles}.\\
The following result characterises structural controllability \cite{Lin:74,Reinschke:88,SH:76}.
\begin{thm}[Structural characterisation of point-wise controllability]\label{lin}
\label{obsdil1} Let $\Sigma_{\Lambda}$  be the linear structured system defined by (\ref{struct}) with associated graph $G(\Sigma_{\Lambda})$.
System $\Sigma_{\Lambda}$ is structurally controllable if and only if
\begin{itemize}
  \item the graph $G(\Sigma_{\Lambda})$ is input-connected, and
  \item the state nodes of $G(\Sigma_{\Lambda})$ can be covered by a disjoint set of stems and cycles.
\end{itemize}
\end{thm}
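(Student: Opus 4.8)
The plan is to route structural controllability through the generic rank of the controllability matrix and then to interpret the two possible rank defects graphically. By definition $\Sigma_\Lambda$ is structurally controllable if and only if the pair $(A_\Lambda,B_\Lambda)$ satisfies the Kalman rank condition for almost all $\Lambda$; since rank is lower semicontinuous and its maximum over $\Lambda$ is attained off a proper algebraic variety, this is equivalent to the \emph{generic rank} condition that $\operatorname{rank}[B_\Lambda,A_\Lambda B_\Lambda,\dots,A_\Lambda^{n-1}B_\Lambda]=n$ for generic $\Lambda$. The whole proof reduces to characterising, in terms of $G(\Sigma_\Lambda)$, when this generic rank equals $n$.

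First I would prove that the two graph conditions are \emph{necessary}. If $G(\Sigma_\Lambda)$ is not input-connected, let $R\subseteq X$ be the set of states reachable by a directed path from some input. Each column of $A_\Lambda^kB_\Lambda$ has support contained in the states reachable from the inputs in at most $k+1$ steps, so the reachable subspace $\operatorname{Im}[B_\Lambda,A_\Lambda B_\Lambda,\dots]$ lies in $\operatorname{span}\{e_i:x_i\in R\}$, whose dimension is $|R|<n$; hence $(A_\Lambda,B_\Lambda)$ is uncontrollable for \emph{every} $\Lambda$. If, on the other hand, the state nodes cannot be covered by a disjoint set of stems and cycles, then in the bipartite graph with left vertices $X$ and right vertices $X\cup U$ (an edge $v\to x_i$ whenever the corresponding entry of $[A_\Lambda\mid B_\Lambda]$ is a parameter) there is no matching saturating $X$; by Hall's theorem there is a dilation, i.e. a set $S\subseteq X$ whose in-neighbourhood in $X\cup U$ has fewer than $|S|$ elements. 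This forces the generic rank of $[A_\Lambda\mid B_\Lambda]$ to be strictly less than $n$. Since every power $A_\Lambda^k B_\Lambda$ with $k\ge1$ has columns in $\operatorname{Im}A_\Lambda$, the reachable subspace satisfies $\operatorname{Im}[B_\Lambda,A_\Lambda B_\Lambda,\dots]\subseteq \operatorname{Im}A_\Lambda+\operatorname{Im}B_\Lambda=\operatorname{Im}[A_\Lambda\mid B_\Lambda]$, and therefore has dimension below $n$ as well. In both cases the system is structurally uncontrollable, giving necessity.

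Next, the \emph{sufficiency} direction: assuming $G(\Sigma_\Lambda)$ is input-connected and its state nodes admit a disjoint cover by stems and cycles, I would show the generic rank of the controllability matrix is $n$. It suffices to exhibit a single $n\times n$ minor that is a nonzero polynomial in $\Lambda$, since a nonvanishing polynomial is nonzero for generic parameters. The combinatorial device is to upgrade the two hypotheses, by a purely graph-theoretic argument, into a spanning subgraph that is a disjoint union of \emph{cacti} (stems rooted at inputs, with cycles attached as buds) covering all $n$ states: input-connectedness lets every stem and every otherwise free cycle be grafted onto an input-rooted branch, while the stem-cycle cover guarantees that all states are used. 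One then verifies that each cactus generates, for generic parameters, a controllable block, and that the $n\times n$ minor of $[B_\Lambda,\dots,A_\Lambda^{n-1}B_\Lambda]$ associated with the spanning cactus cover does not vanish identically.

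The main obstacle I expect is exactly this last point of the sufficiency argument. The entries of $A_\Lambda^{k}B_\Lambda$ are high-degree polynomials carrying many monomials, so a naive expansion of the chosen minor produces numerous terms that could in principle cancel; the crux is to use the algebraic independence of the parameters to show that the monomial attached to the cactus structure survives. I would isolate this as a lemma asserting that the cactus term has no competing monomial of the same multidegree, so cancellation is impossible and the minor is a genuinely nonzero polynomial, completing the equivalence.
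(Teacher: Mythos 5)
The paper itself does not prove this theorem: it is Lin's classical result, quoted with citations to \cite{Lin:74,Reinschke:88,SH:76}, so your attempt can only be measured against the standard arguments in those references. Your necessity direction is complete and correct: lack of input-connectedness traps the reachable subspace inside the span of the coordinates reachable from $U$, and the absence of a disjoint stem-and-cycle cover is (via the matching/Hall correspondence you set up) equivalent to a dilation, which forces $\operatorname{rank}[A_\Lambda \mid B_\Lambda]<n$ for \emph{every} $\Lambda$ and hence, by your inclusion $\operatorname{Im}[B_\Lambda,A_\Lambda B_\Lambda,\dots]\subseteq\operatorname{Im}[A_\Lambda\mid B_\Lambda]$, kills controllability identically. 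That half is solid.

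The genuine gap is in sufficiency, and you have located it yourself without closing it. Your argument rests on the lemma that the ``cactus monomial'' in the selected $n\times n$ minor of the controllability matrix cannot be cancelled because no competing term shares its multidegree; you assert this lemma but do not prove it, and it is exactly where the whole difficulty of Lin's theorem is concentrated. The entries of $A_\Lambda^{k}B_\Lambda$ are sums of walk monomials, the determinant expansion attaches permutation signs, and identical monomials produced by different permutation terms \emph{can} cancel in products of structured matrices; ruling this out for the cactus term requires a genuine term-order or uniqueness-of-walk-decomposition argument, not just the observation that the parameters are algebraically independent. (Your preliminary step --- grafting the disjoint stems and cycles into spanning cacti using input-connectedness --- is also only sketched, since cycles must be attached in order of reachability from the inputs to avoid circular dependencies, though that induction is routine.) The standard way to bypass your unproven lemma is numerical specialization: set every parameter outside the cactus cover to zero, choose explicit numerical values on the cactus edges for which the resulting pair $(A,B)$ is verifiably controllable (e.g.\ by the PBH test), and conclude that some $n\times n$ minor of the controllability matrix, being a polynomial in $\Lambda$ that is nonzero at one point, is a nonzero polynomial, which yields generic controllability. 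Until either that specialization or your multidegree lemma is actually carried out, the sufficiency half of your proof is incomplete.
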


\subsection{Linkings}\label{link}
Before moving on to characterise functional output controllability, we need to recall some additional graph notions.\\
Let us start with some graph $G=(V,E)$, with two possibly intersecting node subsets $V_1$ and $V_2$ of $V$. A path with initial node $v_1 \in V_1$ and terminal node $v_2 \in V_2$ is called a $(V_1-V_2)$-path. This path is said to be {\em direct} if $v_1$ (resp. $v_2$) is the only node of the path in $V_1$  (resp. $V_2$). A $(V_1-V_2)$-{\em linking} is a set of disjoint simple direct $(V_1-V_2)$-paths (with no nodes in common). For consistency, for $v \in V_1\cap V_2$, it is assumed that there is a $(V_1-V_2)$-path of length $0$ from $v$ to itself.
The size of a linking is the number of paths 
it is composed of. 
A maximum $(V_1-V_2)$-linking is a linking of maximum size. 
Finding a maximum linking in the graph $G=(V,E)$ can be performed by using maximum flow techniques, and is then a problem with polynomial complexity \cite{MU:87,YAM:88}. 
This point will be treated in more detail in Section~\ref{Algo}.

When dealing with the graph $G(\Sigma_{\Lambda})$ of a structured system, if we choose $V_1=U$ and $V_2=Y$, we will speak about {\em input-output paths} and {\em input-output linkings}. Note that in this case, nodes of $U$ (resp. nodes of $Y)$ having no incoming edge (resp. no outgoing edge), any input-output path is necessarily direct. Input-output linkings have been a very convenient tool for the study of generic properties and control of structured systems, see for example \cite {CDP:91,VDW1:91,VDW2:91,DionCV:02}.

\subsection{Structural rank of the transfer matrix and functional output controllability}
When dealing with a structured system of type (\ref{struct}), the transfer matrix is $T_{\Lambda}(s)=C_{\Lambda}(sI_n-A_{\Lambda})^{-1}B_{\Lambda}$. 
The generic rank of $T_{\Lambda}(s)$ depends, in a very complex way, on the parameter vector $\Lambda$. 
However, it has been understood since the eighties \cite{CDP:91,OHTA:85,VDW1:91,VDW2:91} that the generic rank can be simply obtained from the graph $G(\Sigma_\Lambda)$.
\begin{thm}[Rank of transfer matrix~\cite {OHTA:85}] \label{RankT}
Let $\Sigma_\Lambda$  be the linear structured system defined by (\ref{struct}) with associated graph $G(\Sigma_\Lambda)$. 
The generic rank of the corresponding transfer matrix $T_\Lambda(s)$ is the size of a maximum input-output linking in $G(\Sigma_\Lambda)$.
\end{thm}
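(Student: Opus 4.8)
The plan is to characterise the generic rank of $T_\Lambda(s)$ combinatorially by reading off its minors as weighted sums over families of input-output paths in $G(\Sigma_\Lambda)$. Recall that the generic rank is the largest $r$ for which some $r\times r$ minor of $T_\Lambda(s)$ is a nonzero rational function for parameter values outside a proper algebraic variety. So it suffices to prove two things: (i) if $G(\Sigma_\Lambda)$ admits an input-output linking of size $r$, then some $r\times r$ minor of $T_\Lambda(s)$ is not identically zero; and (ii) if the maximum linking size is $r$, then every $(r+1)\times(r+1)$ minor vanishes identically.

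First I would obtain a combinatorial formula for an arbitrary minor. Fixing output indices $I$ and input indices $J$ with $|I|=|J|=r$, the Cauchy--Binet formula gives
$$\det T_\Lambda[I,J] = \sum_{|K|=|L|=r} \det C_\Lambda[I,K]\,\det\bigl((sI_n-A_\Lambda)^{-1}\bigr)[K,L]\,\det B_\Lambda[L,J],$$
where $K,L$ range over $r$-subsets of state indices. Using Jacobi's identity for the minors of an inverse, each factor $\det\bigl((sI_n-A_\Lambda)^{-1}\bigr)[K,L]$ equals, up to sign, $\det(sI_n-A_\Lambda)[L^c,K^c]/\det(sI_n-A_\Lambda)$, where $L^c,K^c$ denote complementary index sets. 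Expanding this last minor by the Leibniz rule and reading the nonzero terms in $G(A_\Lambda)$, each surviving term corresponds to a collection of mutually vertex-disjoint paths through the state graph together with disjoint cycles on the remaining states, exactly the kind of disjoint path/cycle structure underlying Theorem~\ref{lin}. Combined with the selection of output edges (from $\det C_\Lambda[I,K]$) and input edges (from $\det B_\Lambda[L,J]$), the whole product assembles, after clearing the common denominator $\det(sI_n-A_\Lambda)$, into a signed sum over families of $r$ vertex-disjoint input-output paths realising $J\to I$, each possibly decorated by disjoint state cycles.

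For the lower bound (i), given a size-$r$ linking $P_1,\dots,P_r$ from inputs $u_{k_1},\dots,u_{k_r}$ to outputs $y_{l_1},\dots,y_{l_r}$, I would take $I=\{l_1,\dots,l_r\}$, $J=\{k_1,\dots,k_r\}$ and isolate in the above expansion the monomial in $\Lambda$ formed by the product of the edge parameters along $P_1,\dots,P_r$, with no accompanying cycles. Because the paths are vertex-disjoint, this monomial uses each edge parameter at most once and cannot be produced by any other path family, hence it cannot cancel, and $\det T_\Lambda[I,J]$ is a nonzero rational function; this shows that the generic rank is at least $r$. For the upper bound (ii), suppose no linking of size $r+1$ exists. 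By Menger's theorem any family of $r+1$ input-output paths must share at least one state node, so I would apply a sign-reversing involution of Lindström--Gessel--Viennot type: in the signed expansion of an arbitrary $(r+1)\times(r+1)$ minor, every term carries $r+1$ input-output walks that meet at a common node, and swapping the tails of two such walks at their first common node pairs terms of opposite sign and equal weight, which cancel. Thus every $(r+1)\times(r+1)$ minor is identically zero and the generic rank is at most $r$. Combining the two bounds yields the claim.

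I expect the main obstacle to be the careful bookkeeping in the upper-bound cancellation: the resolvent $(sI_n-A_\Lambda)^{-1}$ contributes walks rather than simple paths and superimposes state cycles, so making the involution well defined requires decomposing each term into its path and cycle parts and verifying that the sign from Jacobi's identity and from the Leibniz expansion combines correctly with the sign flip induced by the tail swap. The lower-bound surviving-monomial argument is comparatively routine once the combinatorial formula is in place, relying only on the algebraic independence of the parameters collected in $\Lambda$ together with the vertex-disjointness of the chosen linking.
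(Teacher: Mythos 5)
The paper never proves Theorem~\ref{RankT}: it is imported wholesale from the cited literature (Ohta et al., and the van der Woude and Commault et al.\ references), so your attempt must stand on its own rather than be compared to an in-text argument. Your machinery is the right one, and your lower bound is correct: Cauchy--Binet on $T_\Lambda[I,J]$, Jacobi's identity converting $\det\bigl((sI_n-A_\Lambda)^{-1}\bigr)[K,L]$ into $\pm\det\bigl((sI_n-A_\Lambda)[L^c,K^c]\bigr)/\det(sI_n-A_\Lambda)$, and the Leibniz expansion read on the graph do show that, after clearing the single denominator $\det(sI_n-A_\Lambda)$, every surviving monomial corresponds to a family of vertex-disjoint input-output paths together with vertex-disjoint cycles and $s$-factors on the untouched states. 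Since a union of vertex-disjoint paths and cycles is uniquely recoverable from its edge set, and the parameters in $\Lambda$ are algebraically independent, the monomial attached to a given size-$r$ linking (with no cycles) occurs exactly once and cannot cancel, so the corresponding $r\times r$ minor is generically nonzero.

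The genuine gap is in your upper bound. The tail-swap involution of Lindstr\"om--Gessel--Viennot type is only valid when the objects being swapped cannot self-intersect, i.e.\ essentially on acyclic graphs; here $G(A_\Lambda)$ may contain cycles, the resolvent generates walks, and the cancellation you assert is simply false. Concretely, take two states with edges $x_1\to x_2$ and $x_2\to x_1$, inputs attached to $x_1,x_2$ and outputs read from $x_1,x_2$: expanding $\det T_\Lambda(s)$ as a series in $s^{-1}$, at each order $m\ge 1$ there are $m+1$ intersecting walk pairs with positive sign and only $m$ with negative sign, so intersecting families do \emph{not} pair off; the surviving intersecting terms are precisely what reconstructs the denominator $1/\det(sI_n-A_\Lambda)$. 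Hence ``every term whose walks meet a common node cancels'' is not a lemma you may appeal to, and Menger's theorem plays no role (``any $r+1$ paths share a node'' is just the negation of the existence of a linking). The irony is that you do not need any of this: the combinatorial formula you derived in your second paragraph already finishes the proof. In the expansion of an $(r+1)\times(r+1)$ minor, every nonzero monomial of every summand $\det C_\Lambda[I,K]\,\det\bigl((sI_n-A_\Lambda)[L^c,K^c]\bigr)\,\det B_\Lambda[L,J]$ exhibits $r+1$ mutually vertex-disjoint input-output paths (the $B$-edges match $J$ to $L$, the middle minor forces vertex-disjoint state paths from $L\setminus K$ to $K\setminus L$ plus disjoint cycles, the $C$-edges match $K$ to $I$). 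If no input-output linking of size $r+1$ exists, there are no such configurations, so each summand is identically zero as a polynomial in $s$ and $\Lambda$, and the minor vanishes identically --- the cancellation you were trying to engineer by hand is already built into Jacobi's identity. Replacing your Menger/LGV paragraph by this observation yields a complete proof, which is then essentially the classical argument of the cited references.
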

This result is not only remarkably simple, it is also rather intuitive. 
The maximum linking is indeed the maximum number of independent ways that the inputs may use to act on outputs. 
Notice that this result can also be seen as a generalisation of the characterisation of the rank of a structured matrix by means of the size of a maximum matching in the associated bipartite graph \cite{MU:87}.
\begin{exmp}\label{Example1}
Let us consider the structured system with 9 states, 2 inputs and 2 outputs defined by the following matrices:
$$A_\Lambda= \left(
\begin{array}{ccccccccc}
0 & 0 &0 &0 &0 &0 &0 &0 &\lambda_1\\ \lambda_2 & 0 &0 &0 &0 &0 &0 &0 &0 \\ 0 & \lambda_3 &0 &\lambda_4 &0 &0 &0 &0 &0 
\\ 0 & 0 &0 &0 &0 &0 &0 &0 &0 
\\ 0 & \lambda_5 &0 &\lambda_6 &0 &0 &0 &0 &0 
\\ \lambda_7 & 0 &0 &0 &\lambda_8 &0 &0 &0 &0 
\\ 0 & 0 &0 &0 &\lambda_{9} &0 &0 &0 &0
\\ 0 & 0 &0 &0 &\lambda_{10} &\lambda_{11} &\lambda_{12} &0 &0
\\ 0 & 0 &0 &0 &\lambda_{13} &\lambda_{14} &0 &0 &0
\end{array}
\right),$$

$$
B_\Lambda= \left(
\begin{array}{cc}
0 & 0 \\ 0 &0 \\ 0 &0 \\ \lambda_{15} &0 \\ 0 & 0 \\  0 &\lambda_{16} \\ \lambda_{17} &0 \\ 0 &0 \\ 0 &\lambda_{18} 
\end{array}
\right), $$
$$
C_\Lambda = \left(
\begin{array}{ccccccccc}
0 & 0 &0 &0 &0 &0 &0 &\lambda_{19} &0 \\ 0 & 0 &0 &0 &0 &0 &0 & \lambda_{20} &\lambda_{21}
\end{array}
\right).
$$
The corresponding graph is given in Figure~\ref{Fig1}.
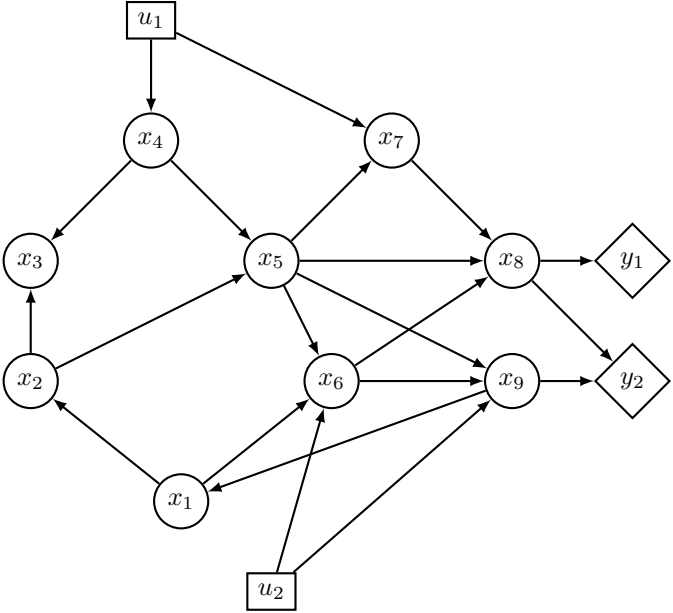
\begin{figure}
 \centering
 \begin{tikzpicture}[scale=0.8]
\begin{scope}[every node/.style={circle,thick,draw}]
    \node (x1) at (2.5,2) {$x_1$};        
    \node (x2) at (0,4) {$x_2$};
    \node (x3) at (0,6) {$x_3$};
    \node (x4) at (2,8) {$x_4$};
    \node (x5) at (4,6) {$x_5$};
    \node (x6) at (5,4) {$x_6$};
    \node (x7) at (6,8) {$x_7$};
    \node (x8) at (8,6) {$x_8$};    
    \node (x9) at (8,4) {$x_9$};
    \node[rectangle,inner sep=4pt] (u1) at (2,10) {$u_1$};
    \node[rectangle, inner sep=4pt] (u2) at (4,0.5) {$u_2$};
    \node[diamond] (y1) at (10,6) {$y_1$};
    \node[diamond] (y2) at (10,4) {$y_2$};
\end{scope}

\begin{scope}[>=latex,
              every edge/.style={draw,thick}]
    \path [->] (x1) edge  (x2);
    \path [->] (x1) edge  (x6);
    \path [->] (x2) edge  (x3);
    \path [->] (x2) edge  (x5);
    \path [->] (x4) edge  (x3);
    \path [->] (x4) edge  (x5);
    \path [->] (x5) edge  (x6);
    \path [->] (x5) edge  (x7);
    \path [->] (x5) edge  (x8);
    \path [->] (x5) edge  (x9);
    \path [->] (x6) edge  (x8);
    \path [->] (x6) edge  (x9);
    \path [->] (x7) edge  (x8); 
    \path [->] (x9) edge  (x1);
        \path [->] (x8) edge  (y1);
        \path [->] (x8) edge  (y2);
        \path [->] (x9) edge  (y2);
        \path [->] (u1) edge  (x4); 
        \path [->] (u1) edge  (x7);       
        \path [->] (u2) edge  (x6);
        \path [->] (u2) edge  (x9);
\end{scope}
\end{tikzpicture}

 \caption{Graph $G(\Sigma_{\Lambda})$ of Example~\ref{Example1}. Input nodes $u_1$ and $u_2$ have rectangle shapes, output nodes $y_1$ and $y_2$ have diamond shapes.} \label{Fig1}
\end{figure}

From Theorem~\ref{RankT}, it can be seen that the generic rank of the transfer matrix  $T_\Lambda(s)$ is two. 
This follows from the fact that a maximal input-output linking in  $G(\Sigma_\Lambda)$ has size two. 
One can choose, for instance, the linking composed of the input-output paths $(u_1,x_4,x_5,x_8,y_1)$ and $(u_2,x_6,x_9,y_2)$.
\end{exmp}
Proposition~\ref{caract} and Theorem~\ref{RankT} can be combined to characterise the generic functional output controllability.
\begin{coro}[Structural characterisation of functional output controllability] \label{OFC}
Let $\Sigma_\Lambda$  be the linear structured system defined by
(\ref{struct}) with associated graph $G(\Sigma_\Lambda)$. 
The system $\Sigma_\Lambda$ is generically functional output controllable if and only if the size of a maximum  input-output linking in $G(\Sigma_\Lambda)$ is $p$, being the number of outputs.
\end{coro}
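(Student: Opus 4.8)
The plan is to obtain this corollary by directly chaining Proposition~\ref{caract} with Theorem~\ref{RankT}; the only genuine work is the careful handling of the word \emph{generically}. Throughout, let $\ell$ denote the size of a maximum input-output linking in $G(\Sigma_\Lambda)$, so that by Theorem~\ref{RankT} the generic rank of $T_\Lambda(s)$ equals $\ell$. I would first recall the precise meaning of the generic rank, as stated in the paragraph preceding Theorem~\ref{RankT}: the rank of $T_\Lambda(s)$, viewed as a matrix over the field of rational functions, takes the fixed value $\ell$ for all $\Lambda$ outside a proper algebraic variety $\mathcal{V} \subset \mathbb{R}^k$, and this generic value is the \emph{maximal} rank attained over all choices of $\Lambda$.

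For the ``if'' direction, suppose $\ell = p$. Then for every $\Lambda \notin \mathcal{V}$ we have $\operatorname{rank} T_\Lambda(s) = p$, and Proposition~\ref{caract}, applied to the fixed (non-structured) system obtained for that particular value of $\Lambda$, states exactly that such a system is functional output controllable. Since $\mathcal{V}$ is a proper algebraic variety, functional output controllability therefore holds for all $\Lambda$ outside a proper algebraic variety, which is precisely the definition of a generic property. Hence $\Sigma_\Lambda$ is generically functional output controllable.

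For the ``only if'' direction I would argue by contraposition. Suppose $\ell < p$. Because the generic rank is the maximal rank attained over the whole parameter space, we have $\operatorname{rank} T_\Lambda(s) \le \ell < p$ for \emph{every} $\Lambda \in \mathbb{R}^k$, not merely generically. By Proposition~\ref{caract}, no such system is functional output controllable, so the property fails on the entire parameter space and in particular cannot hold generically. Combining the two directions yields the stated equivalence.

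The step I would be most careful about is this maximality property of the generic rank: that no choice of parameters can push $\operatorname{rank} T_\Lambda(s)$ strictly above its generic value $\ell$. This is what makes the ``only if'' direction work without any further argument, and it rests on the standard fact that the rank of a parametrized (rational) matrix is lower semicontinuous in $\Lambda$, with its generic value equal to its supremum. With this fact in hand, the corollary is an immediate consequence of the two cited results, so I do not expect any serious obstacle beyond stating this observation cleanly.
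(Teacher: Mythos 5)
Your proposal is correct and follows the same route as the paper: the paper gives no separate proof of Corollary~\ref{OFC}, stating it precisely as the combination of Proposition~\ref{caract} with Theorem~\ref{RankT}. Your additional observation that the generic rank of $T_\Lambda(s)$ is also its \emph{maximal} rank over the parameter space (so that sub-generic linking size rules out functional controllability for every $\Lambda$, not just generically) is the right way to make the ``only if'' direction airtight, and is consistent with the standard structured-systems facts the paper relies on.
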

From Corollary~\ref{OFC}, the structured system of Example~\ref{Example1} is functional output controllable. 
Notice that by Theorem~\ref{lin} this Example~is also structurally controllable in the usual sense, since all the state nodes can be covered by the disjoint stems $(u_1,x_4,x_5,x_7,x_8)$ and $(u_2,x_6,x_9,x_1,x_2,x_3)$.

\subsection{On the difference between point-wise output controllability and functional output controllability}\label{sect:pointwise-vs-functional}
Consider a structured system whose graph is given in Figure~\ref{Fig2}, and whose matrices $(A_\Lambda,B_\Lambda)$ are 
\begin{eqnarray}
A_\Lambda= \left(
\begin{array}{cccc}
0 & 0 &0 & 0\\ \lambda_2 & 0& 0 & 0\\ 0 &\lambda_3 & 0 & 0\\ \lambda_4 & 0 & 0 & 0
\end{array}
\right), \quad B_\Lambda= \left(
\begin{array}{c}
\lambda_1\\ 0 \\ 0 \\0
\end{array}
\right).
\end{eqnarray}
The corresponding controllability matrix is
\begin{eqnarray}
K_\Lambda= \left(
\begin{array}{cccc}
\lambda_1 & 0 &0 & 0\\ 0 & \lambda_1 \lambda_2 & 0&0\\ 0 & 0 & \lambda_1 \lambda_2 \lambda_3 & 0\\ 0 & \lambda_1 \lambda_4 & 0 & 0
\end{array}
\right).
\end{eqnarray}
By Theorem~\ref{lin} this system is not structurally controllable, because the state nodes cannot be covered disjointly by stems and cycles. 
The controllability matrix $K_\Lambda$ has clearly generic rank three. 
It can be checked that we have target point-wise controllability for the target sets $\{x_1, x_2,x_3\}$ and $\{x_1, x_3, x_4\}$ (and their subsets, like for instance $\{ x_3 , x_4 \}$). 
This fact\footnote {This observation incidentally shows that the condition of Theorem~12 in \cite{BLA:10} is not necessary, because the nodes corresponding with the target set $\{x_3, x_4\}$ do not belong to a cactus in the graph.} follows from the generic independence of the corresponding rows in $K_\Lambda$.
Since the system has only one input, the maximal size of an input-output linking is one, and therefore the system is functionally controllable only for target sets composed of a unique variable. 
This shows that, strictly speaking, the values of output variables can be driven to any \textit{value} for target sets as  $\{x_1, x_2,x_3\}$ or $\{x_1, x_4\}$. 
But, clearly, with a unique input one cannot hope to follow any \textit{time profile} for two or more target variables simultaneously.\\
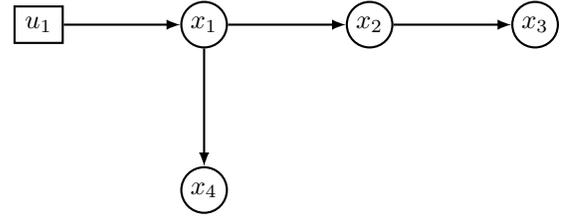
\begin{figure}
 \centering
 
  \begin{tikzpicture}[scale=1.1]
\begin{scope}[every node/.style={circle,thick,draw},inner sep=2pt]
    \node (x1) at (0,0) {$x_1$};        
    \node (x2) at (2,0) {$x_2$};
    \node (x3) at (4,0) {$x_3$};
    \node (x4) at (0,-2) {$x_4$};
    \node[rectangle,inner sep=4pt] (u1) at (-2,0) {$u_1$};
\end{scope}
\begin{scope}[>=latex,     every edge/.style={draw, thick}]
    \path [->] (x1) edge  (x2);
    \path [->] (x1) edge  (x4);
    \path [->] (x2) edge  (x3);
     \path [->] (u1) edge  (x1);       
\end{scope}
\end{tikzpicture}
 \caption{Example illustrating the difference between point-wise controllability and functional controllability, as discussed in Section~\ref{sect:pointwise-vs-functional}.}\label{Fig2}
\end{figure}
\begin{rem}[On self-loops and functional controllability]An important feature of the graph characterisation of functional controllability in Corollary \ref{OFC} is that it only relies on input-output linkings, in which self-loops never appear. In other words, functional controllability is not dependent on the existence of self-loops, in contrast with point-wise controllability where they play a crucial role \cite{Liu:11,Cow:12}.
\end{rem}
\section{The Minimal Target Controllability Problem (MTCP) for complex networks} \label{MTCP}
Let us now come to the problem as it appears in complex networks literature \cite{Liu:11,Ruths:14,OLS:15,CODI:13a,CODI:15,PEQ:16}. 
\subsection{Statement of the MTCP}
As in Problem (\ref{Prob}), we are given a dynamic system, representing the network, with target set of variables ${\mathcal T} \subset X$. 
We have now to choose a minimum number of steering nodes ${\mathcal S}$, which will define  an input set $U_{S}$, and therefore the $B_{\mathcal S}$ matrix, such that the system (\ref{lin2}) is functionally output controllable. 
When tackling this problem in the structured system framework, we have to find a minimum number of steering nodes such that the condition of Corollary~\ref{OFC} is satisfied. 
Since we are looking for a set of $p$ non intersecting  paths from inputs nodes to outputs nodes, it is clear that we must have at least $p$ steering nodes. 
On another hand, taking the target nodes as steering nodes gives a trivial minimal solution. 
The problem is in general more difficult, and also more interesting, because, due to physical considerations, not all nodes may be chosen as steering nodes \cite{OLS:15,CO:19}. 
The Minimal Target Controllability Problem (MTCP) can then be stated as follows. 
\begin{defn}[Minimal Target Controllability Problem (MTCP)]\label{def}
Given a structured system defined by matrices  $A_{\Lambda}$ and $C_{\mathcal T,\Lambda}$ related with a target node set ${\mathcal T}$, find a minimum number of steering nodes ${\mathcal S}$, taken from a given set of available nodes ${\mathcal A}=\{x_{i_1}, \dots,x_{i_k}\}$, such that the corresponding system of type (\ref{struct}) is generically functional target controllable.
\end{defn}
\subsection{Existence of a solution}
As mentioned previously, if there is no restriction on the possible steering nodes, {\em i.e.,} ${\mathcal A=\mathcal X}$, then the MTCP has a trivial solution. 
Instead, when ${\mathcal A \neq \mathcal X}$, the existence of such a solution is not always guaranteed. The conditions are detailed in the following proposition. 
\begin{prop}[Existence of a MTCP solution] \label{Exist}
Let $\Sigma_\Lambda$  be the linear structured system defined by
(\ref{struct}) with associated graph $G(\Sigma_\Lambda)$, where the input matrix $B_{\mathcal A,\Lambda}$ is related with the available node set ${\mathcal A}$, and the output matrix $C_{\mathcal T,\Lambda}$ is related with the target node set ${\mathcal T}$ of size $p$.
The Minimal Target Controllability Problem (MTCP) is solvable if and only if there exists an $({\mathcal A}$-${\mathcal T})$-linking of size $p$ in the graph $G(\Sigma_\Lambda)$. 
When a solution exists, the MTCP can be solved with a set of $p$ steering nodes. 
\end{prop}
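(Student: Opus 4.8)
My plan is to reduce the entire statement to the linking characterisation of functional output controllability in Corollary~\ref{OFC}, and then to argue purely at the level of vertex-disjoint paths among the state nodes. The central preliminary step is to record the correspondence between input-output linkings of an actuated system and path families in $G(A_\Lambda)$. Fix any candidate steering set $\mathcal{S}\subseteq\mathcal{A}$ and form the graph of the system $(A_\Lambda,B_{\mathcal{S},\Lambda},C_{\mathcal{T},\Lambda})$. Because each input $u_k$ has no incoming edge and its unique outgoing edge lands on the distinct state node of $\mathcal{S}$ it actuates, and dually each output $y_l$ has no outgoing edge and is fed only by its distinct target node in $\mathcal{T}$, every input-output path has the form $u_k \to x_{i_k}\to \cdots \to x_{j_l}\to y_l$. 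Deleting the leading input node and trailing output node thus sets up a size-preserving correspondence between input-output linkings in this graph and families of vertex-disjoint $(\mathcal{S}$-$\mathcal{T})$-paths among state nodes. Hence the maximum input-output linking size equals the maximum $(\mathcal{S}$-$\mathcal{T})$-linking size, which I denote $\ell(\mathcal{S},\mathcal{T})$, and which always satisfies $\ell(\mathcal{S},\mathcal{T})\le |\mathcal{T}| = p$. By Corollary~\ref{OFC}, the system actuated at $\mathcal{S}$ is functional output controllable if and only if $\ell(\mathcal{S},\mathcal{T})=p$.

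With this reduction the equivalence follows from two monotonicity observations. For the necessity direction, if the MTCP is solvable then some $\mathcal{S}\subseteq\mathcal{A}$ satisfies $\ell(\mathcal{S},\mathcal{T})=p$; since every vertex-disjoint family of $(\mathcal{S}$-$\mathcal{T})$-paths is \emph{a fortiori} a family of $(\mathcal{A}$-$\mathcal{T})$-paths (their initial nodes lie in $\mathcal{S}\subseteq\mathcal{A}$), we get $\ell(\mathcal{A},\mathcal{T})\ge p$, and with $\ell(\mathcal{A},\mathcal{T})\le p$ this produces an $(\mathcal{A}$-$\mathcal{T})$-linking of size $p$. For sufficiency, an $(\mathcal{A}$-$\mathcal{T})$-linking of size $p$ consists of $p$ disjoint paths issuing from $p$ distinct nodes of $\mathcal{A}$ and reaching the $p$ nodes of $\mathcal{T}$; taking $\mathcal{S}$ to be exactly these $p$ initial nodes, the same paths give $\ell(\mathcal{S},\mathcal{T})=p$, so the actuated system is functional output controllable with $|\mathcal{S}|=p$. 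This already proves the final sentence. Optimality of $p$ is immediate: any input-output linking has size at most the number of input nodes, so functional controllability forces $|\mathcal{S}|\ge \ell(\mathcal{S},\mathcal{T})=p$, whence the set of size $p$ just constructed is of minimum cardinality.

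The step requiring the most care is the directness bookkeeping hidden in both the path-to-linking correspondence and the monotonicity argument, since the definition of a linking in Section~\ref{link} demands \emph{direct} paths, whereas the state paths extracted from an input-output linking, or reread over the larger source set $\mathcal{A}$, may revisit $\mathcal{S}$, $\mathcal{A}$ or $\mathcal{T}$. I expect to dispatch this by the standard truncation argument: any path can be shortened to a direct one by restarting it at the last node it visits in the source set and stopping it at the first subsequently visited node of the target set, an operation that only deletes nodes and therefore preserves disjointness and the number of paths. This shows that the maximum number of vertex-disjoint paths and the maximum linking size coincide, which is precisely what legitimises using the single symbol $\ell(\cdot,\cdot)$ for both quantities and is the one place in the argument that is not purely formal.
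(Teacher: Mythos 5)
Your proof is correct and follows essentially the same route as the paper: both reduce the statement to Corollary~\ref{OFC} and exploit the correspondence between input-output linkings of the actuated system and $(\mathcal{A}$-$\mathcal{T})$-linkings among state nodes, with sufficiency obtained by taking the $p$ initial nodes of the linking as steering nodes. The only difference is that you spell out the endpoint-stripping correspondence, the directness/truncation bookkeeping, and the $|\mathcal{S}|\ge p$ minimality bound explicitly, all of which the paper leaves implicit.
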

\begin{proof}
$\Rightarrow$ If the MTCP has a solution, there exists a set of steering nodes ${\mathcal S} \subset{\mathcal A}$ such that the condition of Corollary~\ref{OFC} is satisfied, {\em i.e.,} there exists a $({U_\mathcal S}$-${Y_\mathcal T})$-linking of size $p$ in $G(\Sigma_\Lambda)$. 
This linking induces an $({\mathcal A}$-${\mathcal T})$-linking of size $p$, therefore the condition is necessary.\\
$\Leftarrow$ If there exists a $({\mathcal A}$-${\mathcal T})$-linking of size $p$ in  $G(\Sigma_\Lambda)$, the condition of Corollary~\ref{OFC} is satisfied by using ${\mathcal A}$ as the set of steering nodes. 
We can also choose as steering node set, the starting nodes of the $p$ paths of the previous linking, which provides a size $p$ solution to the MTCP.
\end{proof}
We emphasize that this proof of existence is constructive, in the sense that it describes a way to select the steering nodes that solve the MTCP (provided a solution exists). Since this solution is in general not unique, in the following section we shall have a closer look at the set of solutions.

\section{Importance of steering nodes}\label{Class1}
We have seen in the previous section that solutions to the MTCP are in general not unique. 
Given a system $\Sigma_\Lambda$ and a set of available nodes ${\mathcal A}$ that satisfy the condition of Proposition~\ref{Exist}, in general there can exist several sets of steering nodes that provide a so-called {\em admissible} solution to the target controllability problem (minimal or not). It is therefore interesting to study what are the relations between such solution sets (for instance, whether there are nodes that necessarily belong to all solutions).
Hence, the purpose of this section will be to classify the importance of nodes of ${\mathcal A}$ with respect to the target controllability problem. 
\begin{defn}[Classes of nodes for functional controllability]
\label{defclas} Let $x_i\in {\mathcal A}$.
\begin{itemize}
\item Node $x_i$ is said to be {\em essential} if $x_i \in {\mathcal D}$ for every admissible solution ${\mathcal D}$. 
\item Node $x_i$ is said to be {\em useless} if for any admissible solution ${\mathcal D}$ containing $x_i$, also ${\mathcal D}/\{x_i\}$ is an admissible solution. Otherwise, node $x_i$  is said to be {\em useful}. 
\end{itemize}
\end{defn}
Essential nodes are then particular useful nodes.
The classification of nodes in ${\mathcal A}$ will need the introduction of some new graph concepts. These concepts and results complete those of Subsection~\ref{link}.
\subsection{Separators}\label{sep}
Consider again a graph $G=(V,E)$, with two possibly intersecting node subsets $V_1$ and $V_2$ of $V$.
A $(V_1,V_2)$-{\em separator} is a set of nodes ${\mathcal S}$ such that every path from $V_1$ to $V_2$ covers a node in ${\mathcal S}$. 
The dependency on $V_1$ and $V_2$ is expressed by writing ${\mathcal S}(V_1,V_2)$.
The separator ${\mathcal S}(V_1,V_2)$ is said to be minimal if any proper subset of ${\mathcal S}(V_1,V_2)$ is not a separator between $V_1$ and $V_2$. 
It is a classical result of combinatorial optimisation that all minimal $(V_1,V_2)$-separators have the same size (cardinality) and that this size is equal to the size of a maximum linking between $V_1$ and $V_2$. \\
The minimal $(V_1,V_2)$-separator is generally not unique.
In this paper, a particular uniquely defined $(V_1,V_2)$-separator, called the {\em minimal left separator}, and denoted by ${\mathcal S}_*(V_1,V_2)$, will be used extensively.
The set of minimal separators may be endowed with a partial ordering. 
Indeed, if $S$ and $T$ are minimal separators between $V_1$ and $V_2$, then $S$ is said to precede $T$, denoted ${S} \prec {T}$, when every direct path from $V_1$ to $V_2$ first passes through $S$ and next passes through $T$. 
The minimal left separator is the infimal minimal separator with respect to this order.
It is the $(V_1,V_2)$-separator of minimal size that is as close as possible to the node set $V_1$.
In simple words, the minimal left separator ${\mathcal S}_*(V_1,V_2)$ is the first smallest bottleneck that is met when we travel from $V_1$ to $V_2$. 
We collect in the following proposition some of the properties of ${\mathcal S}_*(V_1,V_2)$, which will be of interest for our purpose.
\begin{prop}[Minimal left separator] \label{Jacob}
Consider a graph $G=(V,E)$, with two node subsets $V_1$ and $V_2$ of $V$.
Assume that every node of $G$ is contained in a direct $(V_1-V_2)$-path. 
Then the following facts hold true.
\begin{enumerate}
\item The minimal left separator ${\mathcal S}_*(V_1,V_2)$ is uniquely defined and can be computed in polynomial time.
\item  If a node $v_i\in V_1$ does not belong to ${\mathcal S}_*(V_1,V_2)$, then there exists a maximal $(V_1-V_2)$-linking that covers $v_i$.
\item  If a node $v_i\in V_1$ does not belong to ${\mathcal S}_*(V_1,V_2)$, then there exists a maximal $(V_1-V_2)$-linking that does not cover $v_i$.
\end{enumerate}
\end{prop}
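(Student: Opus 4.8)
My plan is to translate everything into a single vertex-capacitated maximum-flow problem and to read off all three statements from the residual graph of one maximum flow. Concretely, I would split each node $v$ into an entry copy $v^-$ and an exit copy $v^+$ joined by an internal arc of unit capacity, turn every original arc into an infinite-capacity arc between the appropriate exit and entry copies, add a source $s$ with infinite-capacity arcs into $v^-$ for each $v\in V_1$ (treating $V_1$-nodes as pure sources, so that integral flows decompose exactly into \emph{direct} $(V_1-V_2)$-linkings), and a sink $t$ fed by the exit copies of $V_2$. By the vertex form of Menger's theorem the maximum flow value equals $q$, the size of a maximum $(V_1-V_2)$-linking, and the minimum cuts are exactly the minimal $(V_1,V_2)$-separators. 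Fixing a maximum flow $f$ and letting $R$ be the set of nodes reachable from $s$ in the residual graph $G_f$, I would \emph{define} ${\mathcal S}_*(V_1,V_2)=\{v : v^-\in R,\ v^+\notin R\}$.

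For item~1, I would note that $(R,\overline R)$ is the minimum cut nearest the source, a classical object: the minimum cuts form a distributive lattice and $R$ is their common infimum, independent of the chosen maximum flow; this gives uniqueness of ${\mathcal S}_*$ and identifies it with the infimal element for the order $\prec$. Polynomial-time computability is then immediate: one maximum flow (e.g.\ by Ford--Fulkerson) followed by one graph search to compute $R$. The one point needing care is checking that this flow-theoretic object really is the $\prec$-infimum, i.e.\ that every direct $(V_1-V_2)$-path meets ${\mathcal S}_*$ no later than it meets any other minimal separator; I would argue this from the fact that a path stays inside $R$ until its first exit arc, which is forced to be an internal arc of a node of ${\mathcal S}_*$.

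Item~3 is the cleaner of the two coverage statements, so I would treat it first. Since $v_i\in V_1\setminus{\mathcal S}_*$ and $v_i^-\in R$ automatically (it is fed directly by $s$), the definition of ${\mathcal S}_*$ forces $v_i^+\in R$ as well. If $f$ already leaves $v_i$ unused we are done after decomposing $f$; otherwise the unit internal arc of $v_i$ is saturated, so the residual path $s\rightsquigarrow v_i^+$ guaranteed by $v_i^+\in R$ cannot traverse that (forward) internal arc. Appending the backward internal arc $v_i^+\to v_i^-$ and the backward source arc $v_i^-\to s$ yields a residual cycle; cancelling one unit of flow around it unsaturates $v_i$ without changing the flow value, and the resulting maximum flow decomposes into a maximum $(V_1-V_2)$-linking that does not cover $v_i$.

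For item~2 I would run the dual construction. Assuming $f$ does not use $v_i$ (else decompose $f$ directly), I want a residual cycle through the \emph{forward} internal arc of $v_i$, which closes provided there is a residual path $v_i^+\rightsquigarrow s$. Here I would use the global hypothesis that $v_i$ lies on a direct $(V_1-V_2)$-path: following that path forward from $v_i^+$ along infinite-capacity arcs, one reaches a flow-carrying node at the latest when crossing ${\mathcal S}_*$, and from there tracing the carried flow backward through reverse arcs leads all the way to $s$. Cancelling one unit around the cycle $s\to v_i^-\to v_i^+\rightsquigarrow s$ saturates $v_i$ while preserving the value $q$, producing a maximum linking covering $v_i$. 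I expect the main obstacle to be precisely this backward trace: it must be guaranteed to reach $s$ rather than loop, which is why I would first replace $f$ by a \emph{cycle-free} maximum flow so that every flow-carrying arc lies on an $s$--$t$ path; the remaining bookkeeping (intermediate internal arcs that are themselves saturated only shorten the trace) and the routine translation between integral flows and direct linkings are then straightforward.
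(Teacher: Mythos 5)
Your proposal is correct, but it takes a genuinely different route from the paper's own proof. The paper handles the three items heterogeneously: items 1 and 3 are essentially delegated to the prior literature \cite{VDW:00} (with only a contrapositive sketch for item 3: if every maximal linking covers $v_i$, then $v_i$ lies in every minimal separator, hence in ${\mathcal S}_*(V_1,V_2)$), while item 2 is proved by an elementary rerouting argument: starting from a maximal $(V_1-V_2)$-linking that avoids $v_i$ (the case where every maximal linking covers $v_i$ being trivial), follow the direct path issued from $v_i$ until it first meets a linking path, and splice the two prefixes. You instead give a unified, self-contained flow-theoretic proof of all three items on the node-split auxiliary graph, defining ${\mathcal S}_*$ by residual reachability and obtaining both coverage statements as unit augmentations along residual cycles through $s$; in effect you have in-lined the machinery that the paper only introduces later, in Section~\ref{Algo} and Proposition~\ref{Sep-Cut} (Ford--Fulkerson together with Picard's characterisation \cite{PIC:80} of the minimum cut closest to the source). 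Your item 2 is the flow-language twin of the paper's rerouting---cancelling flow backward from the first flow-carrying node met along the direct path is exactly the prefix splice---but your items 1 and 3 genuinely differ: they are constructive and mutually independent where the paper relies on citations, and your treatment makes the polynomial-time claim of item 1 immediate. The price is heavier bookkeeping, and two spots are left at sketch level: the check that every minimal separator induces a minimum cut whose source side contains the residual-reachable set $R$ (this is what your $\prec$-infimum claim really needs), and the flow-decomposition argument that, after augmenting, the unit of flow through $v_i$ lies on an $s$--$t$ path rather than on a flow cycle. Both are routine given your cycle-free normalisation (upstream and downstream of the splice node are disjoint in an acyclic flow), so these are presentational gaps, not mathematical ones.
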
 
\begin{proof}
\begin{enumerate}[leftmargin=*]
\item The result follows from \cite{VDW:00}, and computational details will be given in Section~\ref{Algo}.
\item  Assume that we are given a maximal $(V_1-V_2)$-linking that does not cover $v_i$. 
Consider  a $(V_1-V_2)$ direct path ${\mathcal P}$ with initial node $v_i$. Let $x_j$ be the first node at the intersection of ${\mathcal P}$ and of a path, denoted by ${\mathcal P}_k$, in the considered $(V_1-V_2)$-linking. 
A new $(V_1-V_2)$-linking of the same size can be constructed by replacing the part from a node of $V_1$ to  $x_j$ in the path ${\mathcal P}_k$, by the part from $v_i$ to $x_j$ in ${\mathcal P}$.
\item  The result follows from \cite{VDW:00}. 
A sketch of the proof is as follows.
If every maximal $(V_1-V_2)$-linking contains a path that covers node $v_i$, then $v_i$ must be contained in every (minimal) $(V_1-V_2)$-separator. In particular, $v_i$ is in the minimal $(V_1-V_2)$-separator that is closest to $V_1$, {\emph i.e.,} $v_i \in {\mathcal S}_*(V_1,V_2)$.
\end{enumerate}
\end{proof}
\subsection{Classification of available nodes}\label{av}
Proposition~\ref{Jacob} allows to give the complete classification of the nodes of the available set ${\mathcal A}$ with respect to the target set $\mathcal T$. This is simply obtained by choosing $V_1={\mathcal A}$ and $V_2={\mathcal T}$.
\begin{thm}[Classification of nodes] \label{Class2}
Consider the linear structured system defined by
(\ref{Alambda}) with associated graph $G(A_\Lambda)$. 
Let ${\mathcal A}$ be the set of available steering nodes and $\mathcal T$ be the target set. 
With respect to the Minimal Target Controllability Problem, the nodes of $\mathcal{A}$ can be classified as follows.
Considering node $x_i \in {\mathcal A}$, there holds the following.
\begin{enumerate}
\item Node $x_i$ is an essential steering node if and only if $x_i \in {\mathcal A}\cap {\mathcal S}_*({\mathcal A},{\mathcal T})$.
\item Node $x_i$ is a useless steering node if and only if there is no path from $x_i$ to a node in ${\mathcal T}$.
\end{enumerate}
\end{thm}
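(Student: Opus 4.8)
The plan is to prove each of the two characterizations by relating the notions of \emph{essential} and \emph{useless} nodes (Definition~\ref{defclas}) to the graph-theoretic structure captured by the minimal left separator ${\mathcal S}_*({\mathcal A},{\mathcal T})$, exploiting Proposition~\ref{Jacob} with the identification $V_1={\mathcal A}$, $V_2={\mathcal T}$. A preliminary reduction I would make is to restrict attention to those available nodes that actually sit on a direct $({\mathcal A}-{\mathcal T})$-path: only such nodes can participate in an admissible solution, since by Proposition~\ref{Exist} an admissible solution corresponds to an $({\mathcal A}-{\mathcal T})$-linking of size $p$, and each steering node used must be the initial node of a direct path to ${\mathcal T}$. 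This reduction is what lets me invoke Proposition~\ref{Jacob}, whose hypothesis requires every node to lie on such a path; on the complement, nodes with no path to ${\mathcal T}$ are handled directly in part~2.

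For part~2 (useless nodes), I would argue as follows. If $x_i\in{\mathcal A}$ has no path to any node of ${\mathcal T}$, then $x_i$ can never be the initial node of an $({\mathcal A}-{\mathcal T})$-path, so it can never be used to carry one of the $p$ disjoint paths of an admissible linking; hence removing it from any admissible solution ${\mathcal D}$ leaves the size-$p$ linking intact, and ${\mathcal D}\setminus\{x_i\}$ is still admissible, making $x_i$ useless in the sense of Definition~\ref{defclas}. Conversely, if there \emph{is} a path from $x_i$ to ${\mathcal T}$, I would exhibit an admissible solution ${\mathcal D}$ containing $x_i$ for which ${\mathcal D}\setminus\{x_i\}$ fails, showing $x_i$ is useful: concretely, I would build a maximal $({\mathcal A}-{\mathcal T})$-linking one of whose paths starts at $x_i$ (using Proposition~\ref{Jacob}(2) when $x_i\notin{\mathcal S}_*$, and directly when $x_i\in{\mathcal S}_*$), and then take ${\mathcal D}$ to be the initial nodes of that linking; dropping $x_i$ destroys the path through $x_i$ and leaves a linking of size $p-1$, which is no longer admissible.

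For part~1 (essential nodes), the forward direction is the substantive one. If $x_i\in{\mathcal A}\cap{\mathcal S}_*({\mathcal A},{\mathcal T})$, then since ${\mathcal S}_*$ is a minimal separator of size $p$ equal to the size of a maximum linking, every maximal $({\mathcal A}-{\mathcal T})$-linking must use each node of ${\mathcal S}_*$ exactly once; in particular every admissible solution, whose associated linking is maximal, must have one of its paths pass through $x_i$, and because $x_i\in{\mathcal A}$ lies in ${\mathcal S}_*$ (the \emph{leftmost} bottleneck, closest to ${\mathcal A}$) the path through $x_i$ must in fact originate at $x_i$, forcing $x_i\in{\mathcal D}$. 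For the converse, if $x_i\in{\mathcal A}$ but $x_i\notin{\mathcal S}_*({\mathcal A},{\mathcal T})$, then Proposition~\ref{Jacob}(3) furnishes a maximal linking that does \emph{not} cover $x_i$; the initial nodes of this linking form an admissible solution ${\mathcal D}$ with $x_i\notin{\mathcal D}$, so $x_i$ is not essential.

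The step I expect to be the main obstacle is the claim, internal to part~1, that membership in the \emph{left} separator (rather than merely in \emph{some} minimal separator) is exactly what forces a path of every maximal linking not just to \emph{pass through} $x_i$ but to \emph{start} at $x_i$. This requires carefully using the defining property of ${\mathcal S}_*$ as the infimal separator in the $\prec$-ordering — that no smaller bottleneck lies strictly between ${\mathcal A}$ and ${\mathcal S}_*$ — to rule out the possibility that a linking path reaches $x_i$ only after first traversing some other available node, which would contradict $x_i$ being essential. I would pin this down by combining Proposition~\ref{Jacob}(2)–(3) with the leftmost-ness of ${\mathcal S}_*$, and I would be careful to treat the interaction between nodes of ${\mathcal A}$ that happen to lie on the separator versus strictly before it.
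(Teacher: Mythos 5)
Your proposal is correct and follows essentially the same route as the paper's proof: both specialize Proposition~\ref{Jacob} to $V_1=\mathcal{A}$, $V_2=\mathcal{T}$, obtain essentiality of nodes in $\mathcal{A}\cap{\mathcal S}_*(\mathcal{A},\mathcal{T})$ from the fact that every maximum linking must use every node of a minimum separator, and settle the remaining three implications with points 2 and 3 of Proposition~\ref{Jacob} (which, incidentally, the paper cites with the labels 2 and 3 interchanged). One small correction to the step you flag as the main obstacle: the fact that the linking path covering $x_i\in\mathcal{A}$ must \emph{start} at $x_i$ comes not from the leftmost-ness of ${\mathcal S}_*(\mathcal{A},\mathcal{T})$ but from the convention that linkings consist of \emph{direct} paths, which meet $\mathcal{A}$ only at their initial node (the paper leans on the same convention implicitly when it equates ``not choosing $x_i$ as a steering node'' with ``removing $x_i$''); leftmost-ness is needed only where you already invoke it, namely in Proposition~\ref{Jacob}(3) to guarantee that a node outside ${\mathcal S}_*(\mathcal{A},\mathcal{T})$ can be avoided by some maximum linking.
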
 
\begin{proof}
\begin{enumerate}[leftmargin=*]
\item 
$\Leftarrow$ If $x_i$ is in ${\mathcal S}_*({\mathcal A},{\mathcal T})$, which is a minimal separator, by definition removing $x_i$ will decrease the size of a maximum input-output linking, therefore the MTCP has no solution, this implies that $x_i$ is essential. \\
$\Rightarrow$ If $x_i$ is not in ${\mathcal S}_*({\mathcal A},{\mathcal T})$, either it is not covered by a $({\mathcal A}-{\mathcal T})$-path and therefore does not belong to a maximal input-output linking, or, by point 2 of Proposition~\ref{Jacob}, there exists a maximal input-output linking which does not contain $x_i$.
Hence, in both cases $x_i$ is not essential.
\item $\Leftarrow$ If $x_i$ belongs to a solution, but is not covered by a $({\mathcal A}-{\mathcal T})$-path, there exists a maximal $({\mathcal A}-{\mathcal T})$-linking which does not cover $x_i$. 
Discarding $x_i$ will leave this  maximal $({\mathcal A}-{\mathcal T})$-linking unchanged, therefore the steering node set remains a solution without $x_i$, which is then useless.\\
$\Rightarrow$ If $x_i$ belongs to a solution and is covered by a $({\mathcal A}-{\mathcal T})$-path, from point 3 of Proposition~\ref{Jacob}, there exists a  maximal $({\mathcal A}-{\mathcal T})$-linking containing $x_i$. 
This linking provides a solution to the MTCP from which discarding $x_i$ would lead to a steering node set which is not a solution. Then $x_i$ is not useless.
\end{enumerate}
\end{proof}
\begin{exmp}\label{ex:sets-A-T}
Let us consider a network with the same dynamics as in Example~\ref{Example1}, with target nodes $\mathcal{T}=\{x_8,x_9\}$ and whose steering nodes have to be determined among the nodes of the available set ${\mathcal A}=\{x_1,x_2,x_3,x_4\}$. 
The corresponding graph is given in Figure~\ref{Fig3}.
\end{exmp}
\begin{figure}
 \centering

 \begin{tikzpicture}[scale=0.8]
\begin{scope}[every node/.style={circle,thick,draw}]
    \node (x1) at (2.2,2) {$x_1$};        
    \node (x2) at (0,4) {$x_2$};
    \node (x3) at (0,6) {$x_3$};
    \node (x4) at (2,8) {$x_4$};
    \node (x5) at (4,6) {$x_5$};
    \node (x6) at (5,4) {$x_6$};
    \node (x7) at (6,8) {$x_7$};
    \node (x8) at (8,6) {$x_8$};    
    \node (x9) at (8,4) {$x_9$};
\end{scope}

\begin{scope}[>=latex,
              every edge/.style={draw, thick}]
    \path [->] (x1) edge  (x2);
    \path [->] (x1) edge  (x6);
    \path [->] (x2) edge  (x3);
    \path [->] (x2) edge  (x5);
    \path [->] (x4) edge  (x3);
    \path [->] (x4) edge  (x5);
    \path [->] (x5) edge  (x6);
    \path [->] (x5) edge  (x7);
    \path [->] (x5) edge  (x8);
    \path [->] (x5) edge  (x9);
    \path [->] (x6) edge  (x8);
    \path [->] (x6) edge  (x9);
    \path [->] (x7) edge  (x8); 
    \path [->] (x9) edge  (x1);
\end{scope}

\node[draw, rectangle, very thick, green, fit=(x1) (x2) (x2) (x4), label=above:{\color{green}$\mathcal{A}$}, inner sep=7pt] {}; 
\node[draw, diamond, very thick, blue, fit=(x8) (x9), label=above:{\color{blue}$\mathcal{T}$}] {};

\end{tikzpicture}
\caption{Steering node selection for target controllability of Example~\ref{ex:sets-A-T} with target set $\mathcal{T}$ highlighted by a blue diamond. Within set $\mathcal A$ (green rectangle), node $x_1$ is essential, node $x_3$ is useless, nodes $x_2$ and $x_4$ are useful.  \label{Fig3}}
\end{figure}
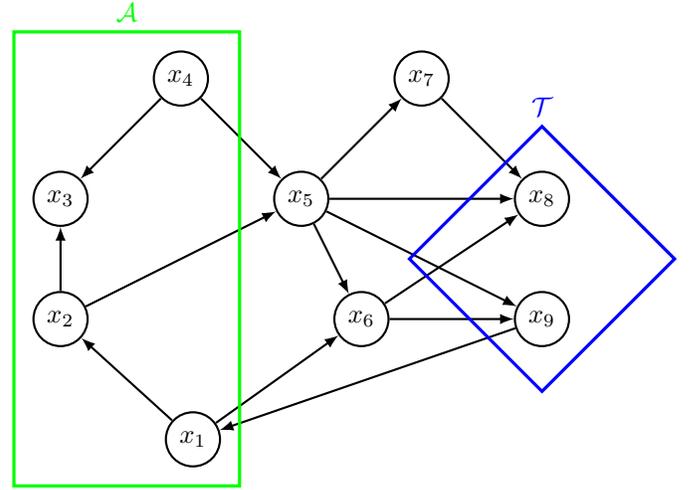

From Proposition~\ref{Exist}, it is clear that the MTCP has a solution. 
For example, we have the size two linking $\{(x_1,x_6,x_9), (x_2,x_5,x_8)\}$. 
Therefore, the nodes $x_1$ and $x_2$ can be chosen as steering nodes for a minimal solution.\\
Let us now examine the importance of the different nodes of ${\mathcal A}$ with respect to the target controllability problem. 
The minimum input separator ${\mathcal S}_*({\mathcal A},{\mathcal T})$ can shown to be equal to $\{x_1,x_5\}$. 
Then from Theorem~\ref{Class2}, node $x_1$ is essential for the target controllability problem, {\em i.e.,} it belongs to all the solutions of the problem. On the contrary, since node $x_3$ belongs to no $({\mathcal A},{\mathcal T})$-path, it is useless for the target controllability problem. Finally, nodes $x_2$ and $x_4$ are simply useful steering nodes.
\begin{rem}[Extension to functional output controllability]
Both the characterisation of a solution in Proposition~\ref{Exist} and the classification of Theorem~\ref{Class2} were stated with respect to a particular target set $\mathcal{T}$. 
However, in the proofs, no use was made of the particular form of the corresponding matrix $C_{\mathcal T,\Lambda}$. 
Therefore, all these results remain valid for a general output matrix $C_{\Lambda}$ and could be stated in terms of functional  output controllability instead of functional target controllability.
\end{rem}

\section{Algorithmic and complexity aspects}\label{Algo}

In this section, we seek to exploit the structural results of Section~\ref{Class1} to propose efficient algorithms to solve the MTCP and to classify the available nodes. 
From Theorem~\ref{Class2} item 2, we know that the {\em useless nodes} in $\mathcal{A}$ are the nodes which are not the initial node of a $(\mathcal{A}-\mathcal{T})$-path. This property can be checked with a depth first algorithm whose complexity is linear in the number of edges of the graph. At the same time, from Theorem~\ref{Class2} item 1, we know that the determination of the essential nodes requires the computation of the minimal left separator $(\mathcal{A}-\mathcal{T})$. We will prove that this set can be obtained from the application of the Ford-Fulkerson algorithm on an auxiliary graph. Preliminarily to showing this fact, we observe that since ${\mathcal S}_*({\mathcal A},{\mathcal T})$ is only related to direct $(\mathcal{A}-\mathcal{T})$-paths, {\em w.l.o.g.} we can delete all the incoming edges in $\mathcal{A}$ and all the outgoing edges from $\mathcal{T}$.
\subsection{Separators and cuts for the essential nodes}\label{sepcut}
Let us briefly recall some basics of flow theory \cite{FF:62}. 
Consider first a graph $G(V,E)$ with two distinguished vertices, a source $s$ with no incoming edge and a sink $t$ with no outgoing edge. 
A flow is a real number $f(e)$ associated with each edge $e$ of the graph, which satisfies the balance equation ({\em i.e.,} for each node, except for the source and the sink, the incoming flow equals the outgoing flow). 
A non-negative integer capacity $c(e)$ is associated with each edge and a flow is said to be feasible if for each edge $e$ of the graph, $0\leq f(e) \leq c(e)$. 
Notice that a flow $0$ on each edge is feasible. 
For a graph with a feasible flow, an augmenting path is defined as an undirected path from $s$ to $t$ ({\em i.e.,} a path containing forward and backward edges) which is such that for each forward edge $e$, we have $f(e)<c(e)$ and for each backward edge $e'$, we have $f(e')>0$. 
The existence of an augmenting path gives the possibility to obtain a new feasible flow which is greater than the previous one. A source set is a set of vertices ${\bar V}$ such that $s\in {\bar V}$ and $t\notin {\bar V}$. The {\em cut} associated with the source set ${\bar V}$ is the set of edges $(v,v') \in E$ such that the initial node $v$ is in ${\bar V}$ and the terminal node $v'$ is not in ${\bar V}$.
The capacity $c({\bar V})$ of the cut is defined as the sum of the capacities of the edges it is composed of. 
The famous Max-flow Min-cut Theorem~\cite{FF:62,MU:87} states that the maximal flow from $s$ to $t$ in the graph $G(V,E)$ is equal to the minimal capacity of a cut.

We are now ready to bear on these notions and algorithms from flow theory to state and prove our final result. Our key instrument will be the following definition of auxiliary graph.
\begin{defn}[Auxiliary graph]
Consider a system $A_\Lambda$ with available set $\mathcal{A}$ and target set $\mathcal{T}$ and its graph $G(A_\Lambda)$. 
We define an associated auxiliary graph $\subscr{G}{aux}(A_{\Lambda})$ as follows:
\begin{itemize}
\item Split each state node $x_i$ of  $G(A_\Lambda)$ into two nodes $x_i^-$ and $x_i^+$, and add an edge $(x_i^-,x_i^+)$.
\item Transform each edge of  $G(A_\Lambda)$ of the form $(x_i,x_j)$ into an associated edge of $\subscr{G}{aux}(A_{\Lambda})$ of the form $(x_i^+,x_j^-)$.
\item Create in $\subscr{G}{aux}(A_{\Lambda})$ a dummy source node $s$ and a dummy sink node $t$, add an edge from $s$ to all the available nodes $\{x_{i1}^-, x_{i2}^-,\ldots,x_{im}^-\}$, and an edge from all target nodes $\{x_{j1}^+, x_{j2}^+,\ldots,x_{jp}^+\}$ to the sink node $t$.
\item Give to all the edges $(x_i^-,x_i^+)$, for $i=1,\dots,n$, a capacity one, and to all other edges of $\subscr{G}{aux}(A_{\Lambda})$ an infinite capacity.
\end{itemize}
\end{defn}

This definition allows us to state the following result.

\begin{prop}[Max flow and Min cut in MTCP]\label{Sep-Cut}
Consider a structured system $A_\Lambda$ with available set $\mathcal{A}$ and target set $\mathcal{T}$ with its graph $G(A_\Lambda)$, and the associated auxiliary graph  $\subscr{G}{aux}(A_{\Lambda})$. The following two facts hold.
\begin{itemize}
\item The size of a maximal $({\mathcal A},{\mathcal T})$-linking in $G(A_\Lambda)$ is the value $F$ of a maximal flow on $\subscr{G}{aux}(A_{\Lambda})$. 
\item The minimal left separator ${\mathcal S}_*({\mathcal A},{\mathcal T})$ of $G(A_\Lambda)$ is in one-to-one correspondence with the minimal cut in $\subscr{G}{aux}(A_{\Lambda})$ that is produced by the Ford-Fulkerson algorithm.
\end{itemize}
\end{prop}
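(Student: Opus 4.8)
The plan is to prove the two bullets separately, both resting on the classical node-splitting correspondence between node-disjoint paths in $G(A_\Lambda)$ and integral $s$-$t$ flows in $\subscr{G}{aux}(A_\Lambda)$. Throughout I use that, after the preliminary deletion of edges entering $\mathcal A$ and leaving $\mathcal T$, every $(\mathcal A,\mathcal T)$-path is direct, so that linkings, separators and cuts all refer to direct paths.

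For the first bullet I would set up a value-preserving correspondence between linkings and integral flows. Observe first that every $s$-$t$ path in $\subscr{G}{aux}(A_\Lambda)$ enters some node $x_i^-$ and must then traverse the capacity-one edge $(x_i^-,x_i^+)$; consequently every feasible flow has value at most $n$ and the maximum flow $F$ is finite. Given an $(\mathcal A,\mathcal T)$-linking of size $k$, I route one unit of flow along each corresponding $s$-$t$ path: since the original paths are node-disjoint they use pairwise distinct split edges, so the resulting flow respects all capacity-one constraints and has value $k$. Conversely, because all capacities are integral, Ford--Fulkerson returns an integral maximum flow; decomposing it into unit $s$-$t$ paths (and discarding any cycles, which carry no net $s$-$t$ flow) and using that each edge $(x_i^-,x_i^+)$ has capacity one, I recover $F$ paths that are pairwise node-disjoint in $G(A_\Lambda)$, i.e. an $(\mathcal A,\mathcal T)$-linking of size $F$. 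Equality of the two maxima follows.

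For the second bullet I would first show that any minimum cut uses only split edges. The source set $\bar V=\{s\}\cup\{x_i^-: i=1,\dots,n\}$ induces a cut of capacity $n$, so the minimum cut is finite; since every non-split edge has infinite capacity, a minimum cut contains none of them and is therefore a set of edges $(x_i^-,x_i^+)$. Removing these edges disconnects $s$ from $t$, so the corresponding node set $\mathcal S=\{x_i\}$ is an $(\mathcal A,\mathcal T)$-separator in $G(A_\Lambda)$; by the Max-flow Min-cut Theorem together with the first bullet its size equals the maximum linking size, hence $\mathcal S$ is minimal. This establishes the one-to-one correspondence $(x_i^-,x_i^+)\leftrightarrow x_i$ between cut edges and separator nodes.

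It remains to identify the specific minimum cut produced by Ford--Fulkerson with the minimal left separator, and this is the main obstacle, since it reconciles a flow-theoretic object with the order-theoretic definition of $\mathcal S_*$. Upon termination the algorithm exposes the source set $\bar V$ of nodes reachable from $s$ in the residual graph, and $(\bar V,V\setminus\bar V)$ is the induced minimum cut. The key claim is that this $\bar V$ is the smallest source set among all minimum cuts: if a node $v$ were residual-reachable from $s$ yet lay on the sink side of some minimum cut, the residual path to $v$ would have to cross that cut either along a saturated forward edge (impossible, since forward cut edges are saturated at the maximum) or along a backward edge carrying positive flow (impossible, since such edges carry zero flow in a minimum cut). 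Hence $\bar V$ is contained in the source side of every minimum cut, i.e. the Ford--Fulkerson cut is the minimum cut lying closest to $s$, equivalently closest to $\mathcal A$. This is exactly the defining property of the minimal left separator with respect to the precedence order $\prec$, so the induced separator equals $\mathcal S_*(\mathcal A,\mathcal T)$; for the precise equivalence between ``closest to $\mathcal A$'' and the infimum in $\prec$ I would invoke the characterisation of $\mathcal S_*$ from \cite{VDW:00} already used in Proposition~\ref{Jacob}.
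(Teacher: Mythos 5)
Your proposal is correct, and it takes a genuinely different route from the paper's proof on both bullets. For the first bullet, the paper works entirely at the level of separators and cuts: it shows that separators of $G(A_\Lambda)$ correspond bijectively to finite-capacity cuts of $\subscr{G}{aux}(A_{\Lambda})$, and then combines the Max-Flow Min-Cut Theorem with the classical (Menger-type) fact, recalled in its Section on separators, that the minimum separator size equals the maximum linking size. You instead build a direct, value-preserving correspondence between linkings and integral flows: routing a unit along each path of a linking, and conversely decomposing an integral maximum flow into unit $s$-$t$ paths whose node-disjointness is forced by the capacity-one split edges. Your route is more self-contained, needing only flow integrality and path decomposition rather than Menger's theorem; one step worth making fully explicit is that each $x_i^-$ has the split edge as its \emph{unique} outgoing edge, so every $s$-$t$ path necessarily alternates $x^-,x^+$ nodes and projects to a genuine simple path in $G(A_\Lambda)$, which is what makes the flow-to-linking projection well defined. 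For the second bullet, the paper simply cites \cite{PIC:80} for the fact that the labelled set at termination of Ford--Fulkerson is the source set of the minimum cut closest to the source; you prove this from first principles via the standard residual-reachability argument (a residual path cannot cross a minimum cut, since forward cut edges are saturated and backward cut edges carry zero flow), again a gain in self-containedness. Both you and the paper ultimately lean on \cite{VDW:00} for identifying the ``cut closest to the source'' with the order-theoretic infimum defining ${\mathcal S}_*({\mathcal A},{\mathcal T})$, so on that final identification the two proofs are at the same level of rigor.
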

\begin{proof}
From the construction of the  auxiliary graph $\subscr{G}{aux}(A_{\Lambda})$, every separator of $G(A_\Lambda)$ induces a cut in $\subscr{G}{aux}(A_{\Lambda})$ whose capacity is the size of the separator. 
Moreover, a cut in $\subscr{G}{aux}(A_{\Lambda})$ has a finite capacity only if it corresponds with a separator of $G(A_\Lambda)$, otherwise the cut would contain an edge with an infinite capacity. Therefore, there is a one-to-one correspondence between separators in $G(A_\Lambda)$ and finite cuts in $\subscr{G}{aux}(A_{\Lambda})$, the size of the separator being equal to the capacity of the cut. As a consequence, any minimum separator of $G(A_\Lambda)$ is in a one-to-one correspondence with a minimum cut of $\subscr{G}{aux}(A_{\Lambda})$. 
The first item of Proposition~\ref{Sep-Cut} then follows from the Max-Flow Min-Cut Theorem, see also \cite{YAM:88,HOV:96}. 

We now prove the second item. Starting from an initial null flow on each edge, the Ford-Fulkerson algorithm~\cite{FF:62}
  is iteratively composed of two phases. 
In the first phase, a labelling procedure, starting in $s$, looks for an augmenting path. 
If no augmenting path is found, the algorithm stops and the actual flow is indeed maximum. 
If an augmenting path is found, the flow is increased along this path. 
The algorithm stops when the flow is maximum and when the set of labelled nodes is the source set associated with the minimal cut, which is the closest to the source \cite{PIC:80}. 
From the previous observation on the correspondence between minimum cuts in $\subscr{G}{aux}(A_{\Lambda})$ and the minimum $({\mathcal A},{\mathcal T})$-separators in $\subscr{G}{aux}(A_{\Lambda})$, the second item of Proposition~\ref{Sep-Cut} follows.
\end{proof}
The auxiliary graph associated to our Example~of Figure~\ref{Fig3} is given in Figure~\ref{Fig4}. 
In Figure~\ref{Fig4}, it appears that starting from a zero flow, a first augmenting path $(s,x_2^-,x_2^+,x_5^-,x_5^+,x_8^-,x_8^+,t)$ composed only of forward edges allows to convey a unit flow from $s$ to $t$. 
Similarly, a second path $(s,x_1^-,x_1^+,x_6^-,x_6^+,x_9^-,x_9^+,t)$ allows to convey a supplementary unit flow from $s$ to $t$. 
When this flow is installed on the graph, the labelling procedure allows to reach the node set $L=\{s,x_4^-,x_4^+,x_5^-,x_3^-,x_3^+,x_2^-,x_1^-,x_2^+\}$. 
This source set is associated to a minimal cut and the corresponding flow is a maximum one. This induces that
\begin{itemize}
\item the minimum number of steering nodes for target controllability is two, and that $\{x_1,x_2\}$ is a possible solution,
\item the set ${\mathcal S}_*({\mathcal A},{\mathcal T})$ is equal to $\{x_1,x_5\}$ which implies that $x_1$ is an essential node.
\end{itemize}
\subsection{Complexity of finding the essential nodes}
The complexity of the Ford-Fulkerson algorithm with integer capacities is of order $O(N_e \cdot f_M)$, where $N_e$ is the number of edges of the graph, and $f_M$ is the value of the maximum flow. 
The number of edges in $\subscr{G}{aux}(A_{\Lambda})$ being bounded by $(2n+2)^2$, and the flow being bounded by $p$, we finally get a complexity of order $O(n^2p)$. 
There are certainly better performing maximum flow algorithms, but it is important to note that the Ford-Fulkerson algorithm also provides the minimal left separator  ${\mathcal S}_*({\mathcal A},{\mathcal T})$ \cite{PIC:80}.

\begin{figure*}[htb]
 \centering

 \begin{tikzpicture}[scale=1.25]
\begin{scope}[every node/.style={circle,thick,draw,inner sep=1pt}]
    \node (x1m) at (3,2.5) {$x_1^-$};        
    \node (x2m) at (0,4) {$x_2^-$};
    \node (x3m) at (0,6) {$x_3^-$};
    \node (x4m) at (1.5,8) {$x_4^-$};
    \node (x5m) at (3.7,6) {$x_5^-$};
    \node (x6m) at (4.5,4) {$x_6^-$};
    \node (x7m) at (6,8) {$x_7^-$};
    \node (x8m) at (8,6) {$x_8^-$};    
    \node (x9m) at (8,4) {$x_9^-$};
    
    \node (x1p) at (5,2.5) {$x_1^+$};        
    \node (x2p) at (1.5,4) {$x_2^+$};
    \node (x3p) at (1.5,6) {$x_3^+$};
    \node (x4p) at (3,8) {$x_4^+$};
    \node (x5p) at (5.5,6) {$x_5^+$};
    \node (x6p) at (6,4) {$x_6^+$};
    \node (x7p) at (7.5,8) {$x_7^+$};
    \node (x8p) at (9.5,6) {$x_8^+$};    
    \node (x9p) at (9.5,4) {$x_9^+$};
\end{scope}
\node[inner sep=5pt,thick,draw,rectangle] (s) at (-1.5,5) {$s$}; 
\node[inner sep=3pt,thick,draw,diamond] (t) at (11,5) {$t$};

\begin{scope}[>=latex,
              every edge/.style={draw,thick}]
    \path [->] (x1p) edge  (x2m);
    \path [->] (x1p) edge  (x6m);
    \path [->] (x2p) edge  (x3m);
    \path [->] (x2p) edge  (x5m);
    \path [->] (x4p) edge  (x3m);
    \path [->] (x4p) edge  (x5m);
    \path [->] (x5p) edge  (x6m);
    \path [->] (x5p) edge  (x7m);
    \path [->] (x5p) edge  (x8m);
    \path [->] (x5p) edge  (x9m);
    \path [->] (x6p) edge  (x8m);
    \path [->] (x6p) edge  (x9m);
    \path [->] (x7p) edge  (x8m); 
    \path [->,bend left] (x9p) edge  (x1m);
        \path [->] (x8p) edge  (t);
        \path [->] (x9p) edge  (t);
        
   		\path [->] (s) edge  (x2m);
        \path [->] (s) edge  (x3m);
        \path [->,bend left] (s) edge  (x4m);
        \path [->,bend right] (s) edge  (x1m);
\end{scope}

\begin{scope}[>=latex,
              every edge/.style={draw,very thick,dashed}]
    \path [->] (x1m) edge  (x1p);
\path [->] (x2m) edge  (x2p);
\path [->] (x3m) edge  (x3p);
\path [->] (x4m) edge  (x4p);
\path [->] (x5m) edge  (x5p);
\path [->] (x6m) edge  (x6p);
\path [->] (x7m) edge  (x7p);
\path [->] (x8m) edge  (x8p);
\path [->] (x9m) edge  (x9p);
\end{scope}

\node[draw, very thick, rounded corners, densely dashed, red, fit=(s) (x4m) (x4p) (x5m) (x3m) (x3p) (x2m) (x2p) (x1m), label=above:{min cut}] {}; 

\end{tikzpicture}
\caption{Auxiliary graph associated to the graph of Figure~\ref{Fig3}. \label{Fig4} Dummy source node $s$ has rectangle shape, dummy sink node $t$ has diamond shape. Edges with infinite capacity are drawn as solid lines, edges with capacity one are drawn as dashed lines. A red dashed box encloses the source set associated to the minimal cut that is identified by the Ford-Fulkerson algorithm.}
\end{figure*}
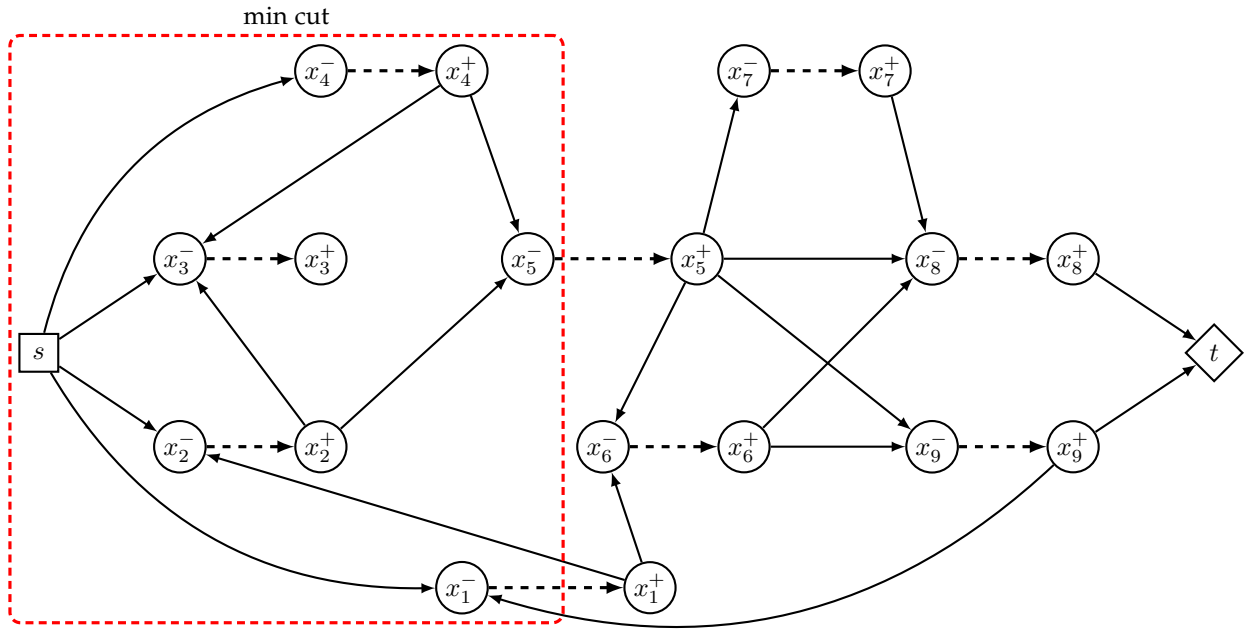

\section{Conclusion}\label{Conc}
In this paper we have introduced the notion of functional target controllability. 
This notion is a relevant alternative to the classical point-wise target controllability. 
We think that this notion is justified by the fact that the importance of target variables needs a refined type of controllability. 
It happens that this new approach induces a simpler characterisation of target controllability in graph terms for structured systems than the classical point of view. 
This opens the possibility to revisit some problems as the robustness of target controllability against edge deletions~\cite{Fardad:14}, or the opposite problem of finding necessary edge additions to reach  target controllability \cite{CHEN:18}. 
\bibliographystyle{IEEEtran}
\bibliography{Target}

\end{document}